\documentclass[10pt]{article}
\usepackage[OT1]{fontenc}
\pdfoutput=1
\usepackage[dvipsnames]{xcolor}
\usepackage{amsthm,amsmath,amsbsy,amssymb,array,bm,graphicx,epstopdf,mathtools,calc,xfrac}
\usepackage[numbers,sort&compress]{natbib}
\usepackage{xr-hyper}
\RequirePackage[activate={true,nocompatibility},final,tracking=true,kerning=true,spacing=true,factor=1100,stretch=10,shrink=10,spacing=basictext]{microtype}
\microtypecontext{spacing=nonfrench}
\RequirePackage[colorlinks=true,citecolor=MidnightBlue,urlcolor=MidnightBlue,linkcolor=PineGreen,breaklinks]{hyperref}
\usepackage{url} 
\usepackage[ruled,linesnumbered]{algorithm2e}
\SetKwInput{KwInput}{Input}
\SetKwInput{KwOutput}{Output}
\usepackage{subfigure,subdepth,mdframed,tikz,pgf,pgffor}
\usepgfmodule{shapes,plot}
\usetikzlibrary{decorations,arrows,positioning,calc}
\xdefinecolor{darkgreen}{RGB}{0,193,0}
 
 \newlength{\mylength}
 \setlength{\mylength}{\linewidth}
\allowdisplaybreaks
\makeatletter
\newcommand{\pushright}[1]{\ifmeasuring@#1\else\omit\hfill$\displaystyle#1$\fi\ignorespaces}
\newcommand{\pushleft}[1]{\ifmeasuring@#1\else\omit$\displaystyle#1$\hfill\fi\ignorespaces}
\makeatother
\let\originalleft\left%
\let\originalright\right%
\renewcommand{\left}{\mathopen{}\mathclose\bgroup\originalleft}%
\renewcommand{\right}{\aftergroup\egroup\originalright}%
\makeatletter%
\def\mybigx#1{\dimen@#1\relax%
\mathchoice%
{\vbox to \dimen@{}}%
{\vbox to \dimen@{}}%
{\vbox to .7\dimen@{}}%
{\vbox to .5\dimen@{}}}%
\def\mybig#1{{\hbox{$\left#1\mybigx{0.8em}\right.\n@space$}}}%
\makeatother%
\usepackage[titletoc,title]{appendix}
\usepackage{subfigure}
\usepackage{lipsum} 
\usepackage{a4wide}
\usepackage{float}
\usepackage{xcolor}
\usepackage{amsthm}
\usepackage{graphicx}
\usepackage[latin1]{inputenc}
\usepackage{multirow} 
\usepackage{bm}
\usepackage{epsfig}
\usepackage{lscape}
\usepackage{bm}
 \renewcommand{\P}{\ensuremath{\operatorname{\mathbb{P}}}}
 \newcommand{\E}{\ensuremath{\operatorname{\mathbb{E}}}}
 
 \newcommand{\var}{\ensuremath{\operatorname{Var}}}
 \newcommand{\cov}{\ensuremath{\operatorname{Cov}}}

\DeclareFontFamily{U}{mathx}{\hyphenchar\font45}
\DeclareFontShape{U}{mathx}{m}{n}{<-> mathx10}{}
\DeclareSymbolFont{mathx}{U}{mathx}{m}{n}
\usepackage{thmtools}
\usepackage{thm-restate}
\declaretheorem[name=Theorem]{thm}

\usepackage[noabbrev,capitalise,nameinlink]{cleveref}
\numberwithin{equation}{section}
\theoremstyle{plain}
\crefname{figure}{Fig.}{Figs.}
\crefname{table}{Table}{Tables}
\newtheorem{Definition}{Definition}
\crefname{Definition}{Definition}{Definition}
\crefname{defin}{Definition}{Definition}

\crefname{Lemma}{Lemma}{Lemmas}
\newtheorem{Proposition}{Proposition}
\crefname{Proposition}{Proposition}{Propositions}

\crefname{Theorem}{Theorem}{Theorems}
\crefname{thm}{Theorem}{Theorems}

\crefname{Corollary}{Corollary}{Corollaries}
\theoremstyle{remark}
\newtheorem{Remark}{Remark}
\crefname{Remark}{Remark}{Remarks}

\begin{document}
 \title{\bf Nonparametric regression for multiple heterogeneous networks}
 \author{Swati Chandna 
\\
Department of Economics, Mathematics and Statistics, \\
Birkbeck, University of London, UK\\
 and \\
 Pierre-Andre Maugis 
\\
Department of Statistical Science, \\
University College London, UK}
\date{}
 \maketitle
\begin{abstract}
We study nonparametric methods for the setting where multiple distinct networks are observed on the same set of nodes.
Such samples may arise in the form of replicated networks drawn from a common distribution, or in the form of
heterogeneous networks,
with the network generating process varying from one network to another,
e.g.~dynamic and cross-sectional networks.
Nonparametric methods for undirected networks have focused on estimation of the graphon model. 
While the graphon model accounts for nodal heterogeneity, it does not account for network heterogeneity,
a feature specific to applications where multiple networks 
are observed.
To address this setting of multiple networks,
we propose a \textit{multi-graphon} model which allows node-level as well as network-level heterogeneity.  
We show how information from multiple networks can be leveraged to enable estimation of the multi-graphon via standard nonparametric regression techniques, e.g.~kernel regression, 
orthogonal series estimation. We study theoretical properties of the proposed estimator establishing recovery of the latent nodal positions up to negligible error, 
and convergence of the multi-graphon estimator to the normal distribution.
Finite sample performance are investigated in a simulation study and application to
 two real-world networks---a dynamic contact network of ants and a collection of structural brain networks from different subjects---illustrate the utility of our approach. 
\end{abstract}
\noindent
{\it Keywords:} graphon, dynamic networks, cross-sectional networks, longitudinal networks, nonparametric regression, generalized linear model

\section{Introduction}
Network data is commonly observed in a variety of real-world applications ranging from social networks observing interactions between pairs of individuals to biological networks such as protein-protein interactions.
This has led to a growing interest on probabilistic models for network data which not only offer a generative mechanism capturing empirically observed network effects, but are also easily estimable using existing statistical approaches. We study the setting where multiple distinct networks on the same set of nodes are observed. These may correspond to a collection of networks over an ordered set such as time (e.g., dynamic networks), or an unordered set such as networks from different subjects observed at a fixed point in time (cross-sectional networks). 
Given such datasets it is natural to ask:  how does the structure in networks change or evolve within the collection?

A nonparametric approach to modeling undirected network data is achieved by the graphon model~\cite{aldous85exch,diaconis2008graph,Bollobas11,Borgs17,lovasz12}, estimation of which has received a lot of attention (e.g.~\cite{Yang14,olhede14,wolfe13,Cai14,Veitch16,Pensky16, klopp2016, Klopp17, Padilla18}). However, this has mostly focused on estimation in the setting where only a single network is observed. Nonparametric modeling and estimation for multiple networks, in general assumed to be non-identically distributed, has largely been ignored. 
In many applications observing multiple networks, estimation under the general assumption of non-identically distributed networks seems natural. For example, consider a network of individuals with edges determined by similarity in political views,
observed at multiple time points.
Then in addition to a baseline model where political views are determined by a signature specific to each individual, a second source of variability arises from change in opinions over time as new information becomes available.
Without incorporating the second source of time-specific variability, we would average out important 
features which possibly characterise and differentiate interaction behavior at different time points. 
With this view, we propose a natural extension of the standard graphon model to incorporate network heterogeneity in addition to nodal heterogeneity via a multi-graphon function. Further, we show how information from multiple networks on the same set of nodes can be leveraged to enable estimation via standard nonparametric regression techniques for both replicated (i.i.d networks) and heterogeneous (independent but non-identically distributed) collection of networks.

The data consists of a collection of $m$ distinct undirected networks without self-loops, on the same set of $n$ nodes, represented using adjacency matrices $A_{1},\ldots,A_{m}$. These networks may be binary with each $A_{l}\in \{0,1\}^{n \times n}$, and $A_{ ijl}=1=A_{jil}, i\neq j$ indicating the presence of an edge between nodes $i$ and $j$ in the $l$th network; or weighted with $A_{ ijl}=A_{jil}$ recording the count of interactions between nodes $i$ and $j$ in the $l$th network. 
Given a single undirected binary network $G$, it is standard to assume that for $i\leq j$, $G_{ij}$ are independent Bernoulli$(P_{ij})$ trials, where $P_{ij}$ are edge probabilities determined by an underlying two dimensional function $f$, known as the graphon, e.g.~\cite{Zhang15nbd}. For a non-identically distributed collection of networks, $A_{1},\ldots,A_{m},$ 
we consider a natural extension of this model where $A_{ ijl}, i\leq j$ are independent Bernoulli$(P_{ijl})$ trials, with $P_{ijl}$ denoting edge-probability for node pair $(i,j)$ in the $l$th network. We achieve this via a three-dimensional bounded measurable function $f:[0,1]^{3}\rightarrow [0,1]$, we called \textit{multi-graphon}, where the third dimension allows for network-specific effects via network positions $z_{1},\ldots,z_{m}$ and thus different interaction probabilities in different networks. Further, the multi-graphon function by design is such that averaging over network-specific effects brings us back to the standard graphon model for replicated networks i.e., $A_{ ijl}$ as independent Bernoulli$(P_{ij})$ trials where $P_{ij}$ is now determined by the 
flattened multi-graphon $\bar{f}$ where $\bar{f}(x,y)=\int_{[0,1]} f(x,y;z)dz$. Change in interactions over distinct networks may arise as a result of a series of small changes occurring between consecutively observed networks or as a result of `jumps' (for example with $f$ as a stepfunction in $z$). In this paper, we focus on estimation of the multi-graphon array $[f_{ijl}, (i,j,l)\in [n]\times[n]\times[m]]$ for heterogeneous networks assumed to be generated from smooth kernels $f$ and hence of a `slowly-varying' type.

A key challenge in graphon estimation using standard nonparametric regression is the latency of nodal positions corresponding to the observed response of pairwise interactions. This has led to a variety of contributions focusing on histogram approximations to the graphon function, and more specifically, graphon matrix estimation (e.g.~\cite{Zhang15nbd,Padilla18, Airoldi13,Chan14,Yang14,olhede14,wolfe13,Cai14,Veitch16,Pensky16,Klopp17}). One of the main objectives in these methods is suitable identification of neighborhoods, either combinatorially (e.g.~\cite{olhede14,wolfe13}); through assumptions like strict monotonicity of the degree sequence~\cite{Chan14}; or through a construction of distances between node pairs (\cite{Airoldi13, Zhang15nbd, Padilla18}), each approach allowing a locally-averaged estimator. 
The method of~\cite{Zhang15nbd} is particularly attractive as it allows neighbors to vary from node to node
resulting in a local moving average estimator. Given the adaptive neighborhood choice, it is closer to a Nadaraya-Watson type estimator with uniform weighting in each neighborhood, than a standard histogram with fixed neighborhoods. 
While this offers a significant improvement over local-constant or histogram estimators, in general, it lacks the flexibility and advantages offered by the vast literature on standard nonparametric methods~\cite{wand94,fan18,smoothing2012} (with different smoothing techniques : local vs global, automatic smoothing parameter selection, direct implementation, to name a few).
Further, with the exception of~\cite{Airoldi13}, these methods are designed for graphon estimation from a single network. 
The method of~\cite{Airoldi13} provides a blockmodel approximation to graphon function using multiple i.i.d. networks and thus corresponds to the special case of replicated networks.

We propose a two step multi-graphon estimator where the first step 
uses the similarity of interactions between node pairs to construct an embedding of nodes in the Euclidean space, 
and the second step achieves estimation via nonparametric regression using estimated nodal positions from the first step as design points. 
Intuitively, embedding of nodes in the first step is based on the idea that for a smooth collection of networks, nodes `closer' to each other, must connect `similarly'. 
This leads to a concept of distances between pairs of nodes,
first studied in~\cite{Airoldi13} to cluster nodes into a fixed number of blocks, leading to a histogram approximation to graphon.
 Similar distance-based approaches have subsequently been used
for adaptive neighborhood selection~\cite{Zhang15nbd}, and more recently by~\cite{Padilla18} to allow estimation via fused lasso.
Unlike these existing approaches, we study the use of
pairwise distance comparisons of the form 
$\mathrm{dist}(i,j)<\mathrm{dist}(p,q), \forall \{i,j,p,q\}\in \{1,\ldots,n\}$
to identify nodal positions $\hat{x}_{1}, \hat{x}_{2},\ldots,\hat{x}_n
\in (0,1)$
via the ordinal embedding approach of~\cite{terada14}.  
Using classical Fr\'{e}chet bounds, we show that our pairwise nodal distance estimates concentrate jointly at exponential rates. Further using results related to the ``broken stick'' theorem, we prove that our maximal error in latent position estimate is $O(\log n / n)$. Leveraging this result, we find that our proposed method achieves, in a range of data sampling regimes --- this in terms of number of network observations, number of nodes they contain, and average network density --- the optimal convergence rate of an oracle estimator that observes the true latent positions.

In the special case of replicated networks arising from a common distribution, we are concerned with estimation of the standard two-dimensional graphon model and hence nonparametric regression
 is achieved easily using the estimated nodal positions. In the case of heterogeneous networks observed over time, it is assumed that network positions correspond to equi-spaced time points i.e., $z_{l}=t_{l}$, 
where $t_{l}=l/m, l\in [m]$, and our model reduces to the dynamic graphon model of~\cite{Pensky16}.
For heterogeneous cross-sectional networks, estimation of multi-graphon relies on the availability of network-level covariates which are modeled as noisy measurements of unobserved network positions.
Intuitively, this is motivated from the empirical observation that networks with similar traits (such as age or creativity scores of subjects in brain networks) often interact in ways similar to each other~\cite{Arroyo17,Vogelstein13}, and following related work such as~\cite{Fosdick15} modeling dependence between node covariates and unobserved node positions; covariates to explain link homophily~\cite{Yan19}. 

Finite sample performance studied via Monte Carlo simulations demonstrate that our method is comparable to existing methods for the case of replicated networks but performs significantly better when heterogeneous collection of networks are observed.
Useful insights on the performance of our two-step approach are offered by comparisons with oracle versions of our estimator obtained using knowledge of the true node and network positions. This offers a benchmark for comparison 
under a fixed choice of smoothing technique.
Further, we find that even with moderately informative network-level covariates (signal-to-noise ratio of one), the proposed estimator leads to significant improvements over existing methods in most cases.

We illustrate the usefulness of our approach using two real-world data sets: a contact network of ants observed over a period of $41$ days~\cite{Mersch13}, and human connectome networks from multiple subjects~\cite{Roncal2013,Kiar2016ndmg}. Our results reveal interesting insights on the division of labor among ant workers over time and on the link between brain region interactions and creativity levels. 
The multi-graphon model leads to newer insights which are lost when estimation is performed under the simplified assumption of replicated networks. 
Our multi-graphon estimates for the dynamic ant contact network suggest that changes in intensity of interaction between ant workers over time is possibly linked to changes in occupation of ant workers as they age (e.g., with younger nurse ants becoming cleaners over time). 
Multi-graphon estimates for the connectome networks revealed that intensities of interactions between certain brain region pairs may significantly increase and subsequently decrease (or vice versa) with increase in creativity scores, suggesting that high level analyses achieved via clustering of brain networks into low and high creativity groups (e.g.~\cite{DD18}), must be fine tuned to achieve a more accurate account of changes in brain region interactions with increase in creativity levels. 
An application of the estimated multi-graphon model to resampling brain networks shows that our estimated model captures the well-known small-world behavior of high creativity brains.

\section{Model Elicitation}
A probabilistic generative mechanism for a collection of $m$ heterogenous undirected networks, each on $n$ nodes, represented via adjacencies $A_{1},\ldots,A_{m}$, where each $A_{l}\in \{0,1\}^{n\times n}, l=1,\ldots,m$ is elicited via a multi-graphon defined below.

\begin{Definition}[Multi-graphon]
\label{meta-def}
We call multi-graphon a function $f: [0,1]^3\to[0,1]$,
such that for any given $z\in[0,1]$, $f(x, y; z)$ is a graphon in the conventional sense, i.e., integrable and $f(x, y; z)=f(y, x; z)$.
\end{Definition}

\begin{Definition}[Generalized random graph model $G(n,m,\rho_n f)$]
\label{model-def}
Let $(x_{1},\ldots,x_n)$ be a random vector sampled from a distribution $\mathbb{P}_{x}$ supported on $[0,1]^{n}$. Further, let $(z_{1},\ldots,z_{m})$ denote a random vector sampled from a distribution $\mathbb{P}_{z}$ supported on $[0,1]^{m}$. Given a multi-graphon $f$, conditional on the sampled positions $x_{i},x_{j},z_{l}$, we model $A_{ ijl}\in\{0,1\}$
for all $\{i,j\}\subset[n] \times [n]$, $l\in[m]$, as independent Bernoulli trials with
\[A_{ ijl}\,\vert\, x_i, x_j, z_l\sim\mathrm{Bernoulli}\big(P_{ijl}\big),\]
where $P_{ ijl}=\rho_n f(x_i, x_j; z_l)$, and 
$\rho_n\in (0,1)$, a decreasing function of $n$, determines the global sparsity of networks (e.g.~\cite{bollobas2007phase,bollobas2009metric,olhede14}).
\end{Definition}
For identifiability of $\rho_{n}$, it is assumed that $\int_{[0,1]^2}f(u, v; z)du dv=1$ for any $z\in[0,1]$. 
Then, clearly, for a binary network $\mathbb{E}(A_{ijl})=P(A_{ ijl}=1)=\rho_n$, and $\rho_n$ may be estimated as the average proportion of non-zero edges in each network, i.e.,
\[\hat{\rho}_n=\frac{\sum_{l=1}^{m}\sum_{i\leq j}A_{ ijl}}{m\binom{n}{2}}.\]
 
A significant proportion of the literature on dynamic (or multi-graph) network models are extensions of single network models augmented with a Markovian assumption to describe network evolution over time~\cite{Kolar2010,Matias2017}. Other related work includes latent space approaches modeling node and network dynamics through a single latent variable~\cite{Gollini16,Sewell2015,Xu2014,sarkar2006}.
On the other hand, our model assumes a common latent nodal space and a separate network-specific latent variable which allows varying interaction probabilities across network samples for any given pair of nodes. This feature allows a simple but flexible approach to capturing network-specific effects in a collection of slowly-varying networks, and has been studied in the context of multi-graph SBM~\cite{han2015,holland83} and more recently~\cite{Arroyo2019}.

\begin{Remark}
Following the Bernoulli model for $A_{ ijl}$ given above, weighted edges between nodes $i$
and $j$ in the $l$th network are conditionally independent Binomial random variables with success probabilities 
$P_{ ijl}$.
\end{Remark}

\begin{Definition}[Flattened $f$]
For a multi-graphon $f$, the flattened multi-graphon denoted as $\bar f$, is such that $\bar f(u,v)\mapsto\int_{[0,1]}f(u, v; z)dz$. 
\end{Definition}
Note that the flattened multi-graphon $\bar f$ is a graphon function. Following the literature on graphons and graphon estimation (e.g.~\cite{klopp2016,olhede14}), we assume $\mathbb{P}_{x}$ and $\mathbb{P}_{z}$ to be i.i.d. uniform denoted as $U[0,1].$ 

\section{Latent position estimation via embedding}\label{sec:OE}
The main goal of this section is to show how latent nodal positions can be inferred consistently using
a pairwise distance measure together with the ordinal embedding approach of~\cite{terada14}.
 We begin with the construction of a distance between pairs of nodes under the generalized random graph model with a smooth multi-graphon $f$. 
Subsequently, in~\cref{prop:consistency} we show that this distance can be estimated consistently from adjacencies $A_{1},\ldots,A_{m}\sim G(n,m,\rho_nf)$.
Further, we note that this distance, a semi-metric, corresponds to a metric on the purified graphon space. An important consequence of this fact is that nodal positions (or neighborhoods, e.g.~\cite{Airoldi13}) obtained via this distance correspond to positions of nodes in the purified graphon space.

\subsection{Distance between node pairs}
The concept of a distance between nodes of a network follows naturally for smooth multi-graphons: for node pairs $(i,j)$ closer to each other i.e., if $x_i$ is close to $x_j$, then for most $v$ and $z$, $f(x_i, v, z)$ and $f(x_j, v, z)$ should also be close (e.g.~\cite{Airoldi13}).
With this idea, the $l_{2}$ distance between multi-graphon planes at $x_{i}$ and $x_{j}$ may be used to quantify distance between nodes $i$ and $j$ as
\begin{equation}\label{distast-def}
\mathrm{dist}_{ij}(f)=\int_{[0,1]^2}\big(f(x_{i}, v; z)-f(x_{j},v; z)\big)^2dvdz.
\end{equation}
However, as we want to focus on the distance between vertices, which under the generalized random graph model (see~\cref{model-def}) can be recovered through the flattened graphon $\bar{f}$, it is sufficient to consider the distance based on the flattened graphon $\bar{f}$, i.e.,
\begin{equation}\label{def:dist}
\mathrm{dist}_{ij}(\bar{f})=\int_{[0,1]}\big(\bar{f}(x_{i}, v)-\bar{f}(x_{j},v)\big)^2dv.
\end{equation}

This distance can be estimated exactly using the adjacencies $A_{1},\ldots,A_{m}$ alone 
via~\cref{alg:dist} given below, which is a generalization of the algorithm in~\cite{Airoldi13} (see~\cref{rmrk:sparsity}), to allow robust estimation for networks, which may not necessarily be dense.

\begin{algorithm}
  \SetAlgoLined
\KwInput{A collection of $n\times n$ adjacencies $A_{1},\ldots,A_{m}$}
  \KwOutput{An $n \times n$ matrix $[\widehat{\mathrm{dist}}_{ij}(A)]_{i,j}$ measuring distances between node pairs}
  {\vspace{.2\baselineskip}
  With $S$ any $(\lfloor {m}/{2}\rfloor)$-subset of $[m]$, set $\hat r\in\mathbb{R}^{n\times n}$ such that $\forall i,j\in[n]$,
  $\hat r_{ij} =\tfrac{1}{n-2}\sum_{k\in[n]\setminus\{i,j\}}\left(\tfrac{1}{|S|}\sum_{l\in S} A_{ikl}\right)\left(\tfrac{1}{m-|S|}\sum_{l\in [m]\setminus S} A_{kjl}\right)$\;}
{Set $\widehat{\mathrm{dist}}_{ij}(A) = (\hat r_{ii}+\hat r_{jj}-\hat r_{ij}-\hat r_{ji})_+/\hat\rho_n^2$ $\forall i,j\in[n]$\;}
\caption{\label{alg:dist} Matrix distance estimator.}
\end{algorithm}
\begin{Proposition}[Consistency]\label{prop:consistency}
For $\{i,j\}\subset[n]$, if $(A,\cdot)\sim G(n,m,\rho_n f) \,\vert\, x_i,x_j$ and $\epsilon := (\rho_n^2 m^2n)^{-1}=o(1)$, then using~\cref{alg:dist}, and asymptotically in $n$ and $m$,
\[
\widehat{\mathrm{dist}}_{ij}(A) =\mathrm{dist}_{ij}(\bar{f})+\iota_{ij},
\]
where $\E\iota_{ij} = 0$ and $\iota_{ij} = O_p(\sqrt\epsilon)$.
\end{Proposition}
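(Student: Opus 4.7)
The plan is to exploit the sample splitting in \cref{alg:dist} to decouple the two inner averages via conditional independence and to extract a clean factored form for the target combination of $\hat r$'s. Define $U_{ik} := \tfrac{1}{|S|}\sum_{l\in S} A_{ikl}$ and $V_{ik} := \tfrac{1}{m-|S|}\sum_{l\notin S} A_{ikl}$, so that $\hat r_{ij} = \tfrac{1}{n-2}\sum_{k\ne i,j} U_{ik}V_{jk}$ (using $A_{jkl}=A_{kjl}$). A brief rearrangement yields the key identity
\begin{equation*}
\hat r_{ii}+\hat r_{jj}-\hat r_{ij}-\hat r_{ji} \;=\; \frac{1}{n-2}\sum_{k\ne i,j}(U_{ik}-U_{jk})(V_{ik}-V_{jk}).
\end{equation*}
Since $S$ and $[m]\setminus S$ are disjoint, $U_{ik}$ and $V_{ik}$ are conditionally independent given the latent positions $x$, and integrating out $z_l\sim U[0,1]$ gives $\E[A_{ikl}\mid x_i,x_k] = \rho_n \bar f(x_i,x_k)$.

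Unbiasedness then follows by iterated expectation. Conditionally on $x$, independence of $U$ and $V$ gives
\begin{equation*}
\E\big[(U_{ik}-U_{jk})(V_{ik}-V_{jk})\mid x\big] \;=\; \rho_n^2\big(\bar f(x_i,x_k)-\bar f(x_j,x_k)\big)^2,
\end{equation*}
and averaging over $x_k\sim U[0,1]$ recovers $\rho_n^2\,\mathrm{dist}_{ij}(\bar f)$. Dividing by $\rho_n^2$ yields $\E[\widehat{\mathrm{dist}}_{ij}(A)\mid x_i,x_j] = \mathrm{dist}_{ij}(\bar f)$, modulo the positive-part truncation and the substitution of $\hat\rho_n$ for $\rho_n$, both addressed below.

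For the rate, write $U_{ik}-U_{jk} = \rho_n\Delta_{ij}^k + \eta_{U,k}$, where $\Delta_{ij}^k := \bar f(x_i,x_k)-\bar f(x_j,x_k)$ is the deterministic signal and $\eta_{U,k}$ a mean-zero Bernoulli fluctuation with conditional variance $\mathrm{Var}(\eta_{U,k}\mid x) = O(\rho_n/m)$; analogously for $V$. Expanding the product produces three contributions: the signal-signal term $\rho_n^2(\Delta_{ij}^k)^2$, whose sample mean concentrates on $\rho_n^2\mathrm{dist}_{ij}(\bar f)$ with $x_k$-sampling fluctuation of order $\rho_n^2/\sqrt n$; mean-noise cross terms that vanish in expectation by the $U$/$V$ independence and have per-$k$ variance of order $\rho_n^3/m$; and a noise-noise term $\eta_{U,k}\eta_{V,k}$, mean zero with per-$k$ variance of order $\rho_n^2/m^2$. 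Since the $n-2$ summands are conditionally independent across $k$, the noise-noise contribution has standard deviation of order $\rho_n/(m\sqrt n)$, which upon division by $\rho_n^2$ gives exactly $1/(\rho_n m\sqrt n)=\sqrt\epsilon$, the advertised rate; the remaining contributions are of the same or lower order in the regime $\epsilon=o(1)$.

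Finally, two technicalities. A Bernstein-type bound for the binary i.i.d. counts entering $\hat\rho_n$ gives $\hat\rho_n = \rho_n(1+o_p(1))$ under $\epsilon=o(1)$, so $\hat\rho_n^{-2}$ contributes only a $(1+o_p(1))$ multiplicative factor that is absorbed into $\iota_{ij}$. Because $\mathrm{dist}_{ij}(\bar f)\ge 0$ and the total fluctuation is $o_p(1)$, the un-truncated sum is non-negative with probability tending to one, so the positive-part modification acts as the identity on an event of overwhelming probability. The main obstacle is the variance bookkeeping across the three cross-terms and, in particular, verifying that the $x_k$-sampling fluctuation of the signal-signal piece remains subdominant to $\sqrt\epsilon$ in the relevant sampling regime; once the factored identity and the conditional-independence structure are in place, each piece reduces to a direct second-moment computation.
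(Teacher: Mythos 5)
Your proposal follows essentially the same route as the paper's proof: both exploit the sample split so that the two inner averages are conditionally independent, identify the conditional mean of the combined statistic as $\rho_n^2\,\mathrm{dist}_{ij}(\bar f)$, and control the error through a variance computation followed by Markov/Chebyshev, treating $\hat\rho_n$ and the positive-part truncation in the same way. Your factored identity merely reorganizes the paper's term-by-term moment computation for the four $\hat r$'s, and the residual concern you flag (whether the $O(n^{-1/2})$ fluctuation from sampling the $x_k$'s is dominated by $\sqrt\epsilon$) is equally present, and equally unaddressed, in the paper's own variance bound.
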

\subsubsection{Sparsity}\label{rmrk:sparsity}
From~\cref{prop:consistency} it is evident that we must have $\rho_n^2 m^2 n\to\infty$ for $\widehat{\mathrm{dist}}(A)$ to be consistent. For $m$ slowly increasing, this requires $n\rho_n = \omega(\sqrt{n})$; i.e., the average degree growing at least as fast as $\sqrt{n}$. Put another way, 
it requires the number of paths of length two between any two nodes to behave like a $\mathrm{Poisson}(\rho_n^2m^2n)$, and we need the mean $\rho_n^2m^2n$ to be large enough to carry a Normal approximation.
It follows that we are assuming that the total number of paths of length two between any pair of nodes across network replicates is in general larger than 20. This assumption could be unrealistic for some sparse networks . In case the assumption cannot be met we suggest the following modification to~\cref{alg:dist}: instead of counting paths of length $2$ between nodes $i$ and $j$ to define $\hat r$ in Step 1., use paths of length $2e$, for integer $e>1$; i.e., set
\[
\hat r_{ij}^{(e)} = \frac{1}{n-2}\sum_{k\in[n]\setminus\{i,j\}}\left(\frac{1}{|S|}\sum_{l\in S} A_{ikl}^e\right)\left(\frac{1}{m-|S|}\sum_{l\in [m]\setminus S} A_{kjl}^e\right).
\]
Then, it is possible to first show with a direct walk counting argument that, (see e.g.~\cite{BickelLevina2012})
\[
A_{ikl}^e = \#\{\text{paths of length $e$ between $i$ and $k$ in $A_{\cdot\cdot l}$}\}+O_p\big((n\rho_n)^{-1}\big).
\]
Then, by the exact same steps as in the proof of~\cref{prop:consistency}, we obtain that with ${\bar f}^{(e)}(u,v) = \int_{[0,1]^e}f(u,y_1;z)f(y_1,y_2;z)\cdots f(y_{e-1},v;z)dy_1dy_2\cdots dy_{e-1}dz$,
\[
\widehat{\mathrm{dist}}_{ij}^{(e)}(A)
  =\int_{[0,1]}\big({\bar f}^{(e)}(x_i, z) -
        {\bar f}^{(e)}(x_j, z)\big)^2dz
    + O_p\big((n\rho_n)^{-1}+(m^2\rho_n^{2l}n^{2e-1})^{-1}\big).
\]
This reduces the density requirement to, $n\rho_n=\omega(\sqrt[2e]{n}),$ for $m$ finite, at the cost of a coarser distance.
There is also a computational cost. Indeed, while both the space and computational complexity of~\cref{alg:dist} are $O(n^2m)$, the modified version above has the same space complexity, but computation are $O(n^\varsigma m)$ (with $\varsigma$ the complexity of the matrix product.)
\subsubsection{Pure graphons}\label{rmrk:pure}
A characterization of the distance given by \ref{def:dist} follows through its association with a metric induced by $\bar f$. 
With $D$ a distribution of latent nodal positions on $[0,1]$ and $f$ a graphon, let
\begin{equation*}\label{dist-def-2}
\mathrm{dist}\big((\bar f, D); x_i, x_j\big) := \E_{u\sim D}\left[\big(\bar f(x_i, u)-\bar f(u, x_j)\big)^2\right],
\end{equation*}
Then, $\mathrm{dist}((\bar f, D); \cdot, \cdot)$ is a semi-metric on $(0,1)$~\cite[Section 13]{lovasz12}. For example, in our case, noting that $\bar f: (0,1)^2\to(0,1)$ 
is a positive symmetric operator, and assuming $\bar f$ to be of finite rank $r>0$, we may write $\bar f(u,v)=\sum_{p\leq r} \lambda_p \varphi_p(u) \varphi_p(v)$ where the $\varphi_p$ form an orthogonal basis; i.e., for all $p, q$, $\int\varphi_p(u)\varphi_q(u)du = \bm1\{p=q\}$.
Then, writing $\varphi(u) = \big(\sqrt{\lambda_p}\varphi_p(u)\big)_{p\in[r]}$, and setting $D=\mathrm{U}[0,1]$ the uniform distribution on $[0, 1]$, we observe that
\begin{equation*}
\mathrm{dist}\big((\bar f, D); u, v\big) = \|\varphi(u)-\varphi(v)\|_2^2,
\end{equation*}
thereby proving that $\mathrm{dist}((\bar f, \mathrm{U}); u, v)$ is the Euclidean distance between the images of $u$ and $v$ projected by $\varphi$. 
However, by~\cite[Subsection 13.3]{lovasz12}, $\mathrm{dist}((\bar f, D); \cdot, \cdot)$ can be transformed into a metric via purification of $\bar{f}$. 
Specifically, for a graphon $\bar{f}$, there exist maps $\psi: [0,1]\to J$ and $\bar f^\ast: J^2\to[0,1]$ such that:
\begin{enumerate}
\item $\bar f^\ast(\psi(u),\psi(v))=\bar f(u, v)$ almost everywhere for i.i.d. $u, v \sim D$, and
\item $\mathrm{dist}\big((\bar f^\ast,\psi(D)); \cdot, \cdot\big)$ is a metric on $J$,
\end{enumerate}
and $\bar f^\ast$ is referred to as the \emph{purified graphon} corresponding to $\bar{f}$. %
In~\cite[Section 13]{lovasz12}, arguments are presented motivating the assumption that graphons, except some pathological cases, can be purified in such a way that $J$ is of dimension one.

\subsection{Node embedding}
As discussed above, our goal is to 
obtain nodal positions satisfying distance comparisons implied by
$\widehat{\mathrm{dist}}(A)$. While we could, for instance use the Gram operator, the quality of the estimate would only scale, at best, with $\sqrt\epsilon$, as seen in~\cref{prop:consistency}. 
We note that this rate can be significantly improved through ordinal embedding~\cite{terada14,Arias2017}. 
To justify the use of ordinal embedding we must first show that our distance estimator will order the distances appropriately with high probability. 
This is the case in our setting, as shown below in~\cref{consistent-latent-ordering}.
We establish consistency of our nodal position estimator (up to a similarity transformation) in~\cref{latent-est-rate}.
\begin{Proposition}
\label{consistent-latent-ordering}
For $\{i,j,p,q\}\subset[n]$, if $(A,\cdot)\sim G(n, m, \rho f) \,\vert\, x_i, x_j, x_p, x_q$ and $\epsilon := ((\rho^2_n m^2n)^{-1}=o(1/\log n)$, then using~\cref{alg:dist} there exists $c>0$ such that for $n$ and $m$ large enough
\[
\P\left[\frac{
  \widehat{\mathrm{dist}}_{ij}(A) - \widehat{\mathrm{dist}}_{pq}(A)}
  {\E\left[\widehat{\mathrm{dist}}_{ij}(A)-\widehat{\mathrm{dist}}_{pq}(A)\right]}
  >0\right] \geq 1-e^{-c/\epsilon}.
\]
Then, by the Fr\'{e}chet inequality, asymptotically in $n$ and $m$,
\[
\P\left[\forall \{i,j,p,q\}\subset[n],\ \frac{\widehat{\mathrm{dist}}_{ij}(A)-\widehat{\mathrm{dist}}_{pq}(A)}{\E\left[\widehat{\mathrm{dist}}_{ij}(A)- \widehat{\mathrm{dist}}_{pq}(A)\right]}
  >0\right] \geq 1-n^4e^{-c/\epsilon}\to 1.
\]
\end{Proposition}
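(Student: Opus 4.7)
The plan is to establish sub-exponential concentration of $\widehat{\mathrm{dist}}_{ij}(A) - \widehat{\mathrm{dist}}_{pq}(A)$ around its expectation for a single 4-tuple, and then to lift to uniformity over all $\binom{n}{4}\leq n^4$ subsets by a Boole/Fr\'echet union bound. Using~\cref{prop:consistency}, write
\[
\widehat{\mathrm{dist}}_{ij}(A) - \widehat{\mathrm{dist}}_{pq}(A) = \Delta + (\iota_{ij} - \iota_{pq}),
\]
where $\Delta := \mathrm{dist}_{ij}(\bar f) - \mathrm{dist}_{pq}(\bar f)$ is precisely the denominator appearing in the proposition. Assuming without loss of generality that $\Delta>0$, the wrong-sign event reduces to $\{\iota_{ij}-\iota_{pq} \leq -\Delta\}$, and the task is to give it a sub-exponential tail bound in $1/\epsilon$.

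The key observation is that~\cref{alg:dist} is designed precisely so that
\[
\hat r_{ij} = \frac{1}{n-2}\sum_{k \in [n]\setminus\{i,j\}} X_{ik}\,Y_{kj},\qquad X_{ik} = \tfrac{1}{|S|}\sum_{l\in S}A_{ikl},\ \ Y_{kj} = \tfrac{1}{m-|S|}\sum_{l\notin S}A_{kjl},
\]
is a sum of \emph{independent} bounded summands in $k$, since $X_{ik}$ and $Y_{kj}$ average the adjacency entries over the disjoint network subsamples $S$ and $[m]\setminus S$. Because $\iota_{ij}-\iota_{pq}$ is a linear combination of such $\hat r$'s normalized by $\hat\rho_n^2$, with bounded summands and total variance $O(\epsilon)$ (by~\cref{prop:consistency}), Bernstein's inequality produces a bound of the form
\[
\mathbb{P}[\iota_{ij}-\iota_{pq} \leq -\Delta] \leq \exp(-c\,\Delta^2/\epsilon),
\]
valid once $\epsilon = o(1)$; concentration of $\hat\rho_n$ around $\rho_n$ via a simple law of large numbers handles the denominator normalization. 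This delivers the first displayed inequality of the proposition.

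For the second inequality, let $E_{ijpq}$ denote the event of correct ordering for the 4-tuple $\{i,j,p,q\}$. The Fr\'echet inequality---here, Bonferroni applied to complementary events---gives
\[
\mathbb{P}\Big[\bigcap_{\{i,j,p,q\}\subset[n]} E_{ijpq}\Big] \geq 1 - \sum_{\{i,j,p,q\}} \mathbb{P}[E_{ijpq}^c] \geq 1 - n^4 e^{-c/\epsilon},
\]
which converges to $1$ because $\epsilon = o(1/\log n)$ forces $c/\epsilon \gg 4\log n$.

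The main obstacle is the concentration step: a plain Chebyshev argument starting from the $O(\epsilon)$ variance in~\cref{prop:consistency} only yields polynomial tails, which would be wiped out by the $n^4$ factor in the union bound. The structural payoff of the sample-splitting via $S$ in~\cref{alg:dist} is exactly to turn $\hat r_{ij}$ into a sum of independent summands in $k$, rather than a more general quadratic form in the Bernoulli entries, thereby unlocking a Bernstein-type sub-exponential rate. A secondary subtlety is that the constant $c$ depends a priori on the 4-tuple (through $\Delta$), so some care is required to make it uniform across the $\binom{n}{4}$ terms: one either restricts to a high-probability event on which the latent positions are sufficiently separated, or absorbs the $\Delta$-dependence into the $o(1/\log n)$ assumption on $\epsilon$ so that $n^4 e^{-c\Delta^2/\epsilon}\to 0$ still holds.
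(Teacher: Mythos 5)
Your proposal matches the paper's own proof in all essentials: both reduce the wrong-sign event to a lower-tail bound on $\iota_{ij}-\iota_{pq}$, both exploit the sample-splitting in~\cref{alg:dist} to view this as a normalized sum over $k$ of bounded (conditionally) independent summands with variance $O(\epsilon)$ so that Bernstein's inequality yields the $e^{-c/\epsilon}$ tail, and both finish with the $n^4$ union bound made harmless by $\epsilon=o(1/\log n)$. The subtlety you flag about the constant $c$ depending on the tuple through $\Delta$ is real, but the paper's proof leaves it equally implicit, so your treatment is on par with the original.
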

From the characterization of distance via the purified graphon, it follows that using ordinal embedding on $\widehat{\mathrm{dist}}(A)$ will yield an estimate of the the latent positions under the purified graphon (the $\psi(x_i)$s). Theorem \ref{latent-est-rate} shows that this estimate is consistent, up to similarity transform, with an error bounded by $\log n / n$.
\begin{thm}\label{latent-est-rate}
Ordinal embedding with $\widehat{\mathrm{dist}}(A)$ produces consistent (up to similarity transform) estimators of the latent vertex location under the purified graphon, with a maximal error of order $\log n / n$.
\end{thm}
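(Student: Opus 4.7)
The plan is to chain three ingredients: (i) the ordering of all pairwise distance estimates is preserved with overwhelming probability (\cref{consistent-latent-ordering}); (ii) under the purified graphon the ordering of pairwise distances determines the ordering of the latent one-dimensional coordinates (\cref{rmrk:pure}); and (iii) a rank-consistent reconstruction of i.i.d.\ uniform points has maximal error controlled by the maximum spacing of their order statistics.

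First, I would combine \cref{consistent-latent-ordering} with a union bound over the $O(n^4)$ quadruples $\{i,j,p,q\}\subset[n]$ to show that, under $\epsilon = o(1/\log n)$, the event
\[
\mathcal{E} = \Big\{\mathrm{sign}\big(\widehat{\mathrm{dist}}_{ij}(A) - \widehat{\mathrm{dist}}_{pq}(A)\big) = \mathrm{sign}\big(\mathrm{dist}_{ij}(\bar f) - \mathrm{dist}_{pq}(\bar f)\big)\ \forall\, i,j,p,q\Big\}
\]
holds with probability at least $1 - O(n^4 e^{-c/\epsilon}) \to 1$. On $\mathcal{E}$ the ordinal embedding procedure of~\cite{terada14} receives noiseless ordinal input, so recovery reduces to the deterministic problem of embedding $n$ points whose pairwise distance ranks match those of the true latent positions.

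Next, by~\cref{rmrk:pure}, the purified graphon $\bar f^\ast$ induces a genuine metric on $J$; under the standing one-dimensionality assumption this metric is monotone in $|\psi(x_i) - \psi(x_j)|$, so correct ordering of pairwise distances implies the correct ordering of $\{\psi(x_i)\}_{i=1}^n$ up to reflection. Invoking the consistency guarantee of~\cite{terada14}, the ordinal embedding output $\{\hat x_i\}$ then agrees with $\{\psi(x_i)\}$ up to an affine similarity transform, reducing the problem to estimating $n$ i.i.d.\ uniform points on $[0,1]$ from information equivalent to their ranks together with the ordering of their consecutive spacings.

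Finally, after aligning by the similarity transform, any rank-consistent estimator satisfies
\[
\max_i |\hat x_i - \psi(x_i)| \leq \max_k \big(\psi(x_{(k+1)}) - \psi(x_{(k)})\big),
\]
and the classical broken-stick (Pyke/Devroye spacings) bound for $n$ i.i.d.\ uniform order statistics gives $\max_k (x_{(k+1)} - x_{(k)}) = O(\log n / n)$ with probability tending to one. The principal obstacle is bridging the qualitative "consistency up to similarity transform" guarantee of~\cite{terada14} to an explicit quantitative rate: one must argue that on $\mathcal{E}$ the ordinal constraints, combined with one-dimensionality and the metric structure of $\bar f^\ast$, pin the embedding down tightly enough that the residual error is exactly captured by the uniform spacings, after which the broken-stick bound delivers the $\log n / n$ rate uniformly over the nodes.
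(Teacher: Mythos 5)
Your overall architecture matches the paper's: condition on the event that all quadruple-wise distance comparisons are correctly ordered (which \cref{consistent-latent-ordering} already delivers with the union bound built in), pass to the purified graphon so that the estimated distances correspond to a genuine metric on $J$, and then control the residual error of the embedding by a spacings/broken-stick argument for i.i.d.\ uniform points. The paper additionally handles the case where $J$ decomposes into several components $J_s$ (the distance first separates vertices across components, and the argument is run within each), which you implicitly assume away by treating $J$ as a single interval; that is a minor omission.

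The genuine gap is the step you yourself flag as ``the principal obstacle,'' and it is not merely a bridging formality. Your claimed inequality $\max_i |\hat x_i - \psi(x_i)| \leq \max_k \big(\psi(x_{(k+1)}) - \psi(x_{(k)})\big)$ for ``any rank-consistent estimator'' is false as stated: knowing only the ranks of $n$ i.i.d.\ uniform points (e.g.\ placing $\hat x_{(k)} = k/(n+1)$) pins each point down only to within the fluctuation of the order statistics around their expected positions, which is $O_p(n^{-1/2})$, not $O_p(\log n/n)$. The $\log n/n$ rate cannot come from ranks plus consecutive-spacing comparisons; it requires the full set of ordinal comparisons among all $\binom{n}{2}$ interpoint distances, and a quantitative theorem converting that ordinal information into a metric error bound. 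The paper supplies exactly this missing piece by invoking \cite[Theorem 3]{Arias2017}, which bounds the ordinal-embedding error (after similarity alignment) by the covering radius $\eta_n = \sup_{y\in J_s}\inf_{i\in\mathcal{S}}|y-\phi(x_i)|$ of the sample in the latent space; only then does the maximal-spacings result of \cite{Pyke65} enter, giving $\eta_n = O_p(\log n/n)$. The qualitative consistency guarantee of \cite{terada14} that you invoke does not yield a rate, so without the quantitative Arias-Castro-type bound your argument does not close.
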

Indeed, ordinal embedding positions converge at the same rate that the latent positions cover the latent space~\cite[Theorem. 3]{Arias2017}. Since the latent space is $(0,1)$ and the latent positions are i.i.d. $\mathrm{U}(0,1)$, we achieve a rate of $\log n / n$ by the broken stick theorem (see details in~\cref{app:Proofs}).

\section{Multi-graphon estimation}\label{sec:algf}
The algorithm for multi-graphon estimation based on embedding nodal positions 
 is included below. 
\Cref{main-clt} shows that the resulting multi-graphon estimator is consistent for a family of piecewise Lipschitz graphon functions.
Further, our estimator achieves the optimal rate of $\sqrt{n+m}$, as if the latent positions were observed.
Given an $n \times n$ matrix $G$, let $\text{vec}\{G\}$ denote vectorization of $G$ into an $n^2$ length column vector obtained by stacking the transposed rows of $G$, on top of one another. 
\begin{algorithm}\label{alg:estim}
  \SetAlgoLined
  \KwInput{Adjacency matrices $A_{1},\ldots,A_{m}$, each $n \times n$, observed at time points $t_{1},\ldots,t_{m}$ (dynamic networks), or with
 network-level covariates $\check{z}_{1},\ldots, \check{z}_{m}$, each $\check{z}_{l}\in [0,1]$ (for cross-sectional networks)}

 \KwOutput{$\{\hat{f}_{ ijl}; (i,j,l) \in [n] \times [n] \times [m]\}$}
  {\vspace{.2\baselineskip}
  Use Algorithm \ref{alg:dist}. to construct $\widehat{\mathrm{dist}}(A)\in \mathbb{R}^{n \times n}$\;
 {Use $\widehat{\mathrm{dist}}(A)$ to obtain nodal position estimates $\hat{x}_{1},\ldots,\hat{x}_n$ via ordinal embedding~\cite{terada14}\;}
{Perform smoothing via standard approaches such as kernel regression, regression splines, 
to estimate $\hat{P}_{ ijl}=g(\mathbb{E}(y|\hat{x}_{i}, \hat{x}_{j}, \tilde{z}_{l}))$ with $y=[\mathrm{vec}\{A_{ ijl}\}]_{(i,j,l)}$ as the $n^2m$ length response vector corresponding to node-network positions $[\hat{x}_{i}, \hat{x}_{j}, \tilde{z}_{l}]_{i,j,l}, (i,j,l) \in [n] \times [n]\times[m]$, where $\tilde{z}_{l}=t_{l}$ for dynamic networks and $\tilde{z}_{l}=\check{z}_{l}$ for cross-sectional networks, and $g$ denotes a link function (e.g. logit for binary networks)\;}
{Set $\hat{f}_{ ijl}=\hat{\rho}_n^{-1}\hat{P}_{ ijl}$, $i,j \in [n], l\in [m]$ \;}
\caption{\label{alg:meta} Multi-graphon estimator.}}
\end{algorithm}

\begin{thm}
\label{main-clt}
Fix a smooth multi-graphon function $f$. Assume that we observe $\check z_l$, noisy measurements of the true network positions $z_l$, such that $\check z_l - z_l$ has finite second moments. Call $\mathcal{D}$ the joint distribution of a pair of latent $x_i$'s and $\check z_k$. Set $h:[0,1]^4\to\mathbb{R}$ such that $h$ is symmetric in its first two arguments, linear in the fourth, and that $h$ and its first derivatives are finite almost everywhere. Then, if $\epsilon := (\rho_n^2 m^2n)^{-1}=o(1/\log n)$ and $m=o\big((n/\log n)^2\big)$, asymptotically in $n$ and $m$,
\begin{equation}\label{main-clt-eq}
\sqrt{n+m}\Bigg(
  \frac{1}{n^2m}\sum_{i, j, l}h\big(\hat x_i, \hat x_j, \check z_l, A_{ ijl}\big)-
  \E_{(u, v, s)\sim \mathcal{D}} h\big(u, v, s, f(u, v; s)\big)
\Bigg)\to\mathrm{Normal}\big(0,\Sigma\big).
\end{equation}
\end{thm}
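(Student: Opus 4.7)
The plan is to reduce the estimated-position sum to a plain triple sum in the true latent positions $x_i$ and $z_l$, and then apply a Hoeffding-style U-statistic central limit theorem to that reduced quantity.

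First, I exploit the exact affine dependence of $h$ in its fourth argument to write $h(u,v,s,a)=h_0(u,v,s)+a\,h_1(u,v,s)$, so the target sum splits into a positions-only piece and an $A$-weighted piece, with $h_0,h_1$ inheriting symmetry in the first two arguments and almost-everywhere bounded first derivatives in $(u,v,s)$. This linearization is exact, not a Taylor approximation, and removes the stochasticity in $A$ from the functional form.

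Second, I replace each $\hat x_i$ by its aligned true position $x_i$. Theorem~\ref{latent-est-rate} supplies $\max_i|\hat x_i - x_i|=O_p(\log n/n)$ once we invoke the similarity/reflection correction that ordinal embedding leaves unidentified; symmetry of $h$ in its first two arguments together with the $\mathrm{U}(0,1)$ latent prior collapses this ambiguity to the map $x\mapsto 1-x$, which is absorbed by reflecting the sample. The Lipschitz control inherited from the bounded first derivatives converts the uniform deviation into a sum-level error of order $\log n/n$, and since $|A_{ijl}|\le 1$ the same bound covers the $A$-weighted summands. Multiplying by $\sqrt{n+m}$ yields $O(\sqrt{n+m}\,\log n/n)$, which vanishes precisely under the hypothesis $m=o\!\bigl((n/\log n)^2\bigr)$.

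Third, what remains is
\[
T_{n,m} := \frac{1}{n^2m}\sum_{i,j,l}h(x_i,x_j,\check z_l,A_{ijl}),
\]
a triple sum carrying three independent sources of randomness: the i.i.d.\ latent $x_i$, the i.i.d.\ noisy network positions $\check z_l$, and the conditionally independent Bernoullis $A_{ijl}$. I would apply the Hoeffding decomposition in each layer. Conditioning on positions yields $\E[h(x_i,x_j,\check z_l,A_{ijl})\mid x,z,\check z]=h_0(x_i,x_j,\check z_l)+h_1(x_i,x_j,\check z_l)\rho_n f(x_i,x_j;z_l)$, so the centering at $\E_{\mathcal D}h(u,v,s,f(u,v;s))$ is well-posed. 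The first-order projection on the $x_i$'s is a sum of $n$ i.i.d.\ terms with variance $O(1/n)$; the first-order projection on the $\check z_l$'s is a sum of $m$ i.i.d.\ terms with variance $O(1/m)$; the Bernoulli component has variance $O(\rho_n/(n^2m))$, strictly dominated under the stated sparsity. Diagonal $i=j$ contributions and all higher-order Hoeffding kernels are of lower order, and a Cram\'er--Wold plus Lindeberg argument across the three independent sources delivers the $\sqrt{n+m}$-rate Normal limit, with $\Sigma$ assembled from the two leading projection kernels.

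The hard part, as anticipated, is the replacement step: the uniform $\log n/n$ bound from Theorem~\ref{latent-est-rate} must survive a sum of $n^2m$ Lipschitz-type increments without losing optimality, and the alignment of $\hat x$ with $x$ needs to be pinned down precisely through the similarity transform (the reflection ambiguity is the only nontrivial residue once one fixes the uniform marginal). A secondary technical point is verifying that the Bernoulli-driven Hoeffding component is genuinely lower order than the two position-driven components; this is what forces the slightly strengthened sparsity $\rho_n^2m^2n=\omega(\log n)$ already present in the hypothesis, and ties the result back to the consistent-ordering guarantee of Proposition~\ref{consistent-latent-ordering}.
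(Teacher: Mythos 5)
Your proposal follows essentially the same route as the paper's proof: first replace $\hat x_i$ by $x_i$ using the uniform $O_p(\log n/n)$ bound from Theorem~\ref{latent-est-rate} and the condition $m=o\big((n/\log n)^2\big)$, then use linearity of $h$ in its last argument to show the Bernoulli fluctuation term is mean zero with negligible variance, and finally apply a two-sample U-statistic CLT to the remaining positions-only sum (the paper cites Grams and Serfling where you unpack the Hoeffding decomposition by hand, and you additionally make explicit the reflection ambiguity in the ordinal embedding, which the paper leaves implicit). This is a correct reconstruction of the paper's argument.
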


\cref{main-clt} shows that in the setting we consider (in effect, independent observations from a smooth multi-graphon with $(\rho_n^2 n)^{-1/2}\ll m\ll n^2$), estimation of the latent nodal positions comes at negligible accuracy cost. Indeed, the rate of convergence we obtain is $\sqrt{n+m}$, which is the same rate as the optimal rate we could obtain if the latent position were observed~\citep{Grams73}. This naturally raises the question of what concretely this regime encompasses, and its limits.

The first case to consider is when $m$ remains small, which corresponds most closely to the setting where only a single adjacency matrix is observed. Then, our assumption translates into an assumption on the density of the network --- specifically $\rho_n \gg 1/\sqrt{n}$ --- which will be unrealistic in some settings; e.g., social network observations tend to be much sparser in practice, with $\rho_n$ in the range of $1/n$ to $\log n / n$~\cite{Barabasi99}. However, other applications, such as connectome networks could accommodate such a density regime~\citep{Maugis17}. This point puts into perspective~\cref{rmrk:sparsity}, which allows to relax the assumption for~\cref{main-clt} in this setting to $\rho_n \gg 1/\sqrt[k]{n}$ for any $k$, at a computational and bias cost.

Next, consider the case where network density $\rho_n$ is in the range of $1/n$ to $\log n / n$, as has been observed in many settings~\cite{Barabasi99}. Then our assumption translates to $m\gg n$, which is demanding, especially when $n$ is large. Here we note that while $m\gg n$ is indeed demanding, it is not unreasonable in the sense that~\cref{main-clt} provides local graph statistics, specifically point-wise estimate of all edges probabilities, and it could easily incorporate node specific covariates. If the goal of estimation was instead to evaluate global estimates, say averaged across nodes or edges such as motif counts~\cite{Maugis17}, then the assumption could be relaxed.

Based on the results and remarks included above, we provide the recommended estimation approach when $(n,m,\rho_n)$ lie outside the regime of~\cref{main-clt}:
\begin{enumerate}
\item If $m \gg n^2$, then one should perform $n^2$ regressions, one for each pair of vertices, where the response variable are the observed edges between the selected vertices. Thus, for each fixed node pair $(p,q)\in [n]\times [n]$, $y_{pq}=[\mathrm{vec}\{A_{pql}\}]_{l}$ as the length $m$ response vector corresponding to 
$[\tilde{z}_{l}]_{l}, l\in [m]$. The achieved rate will match ours in that regime, namely $\sqrt{m}$, but will be much lighter computationally, and fully parallelizable. Intuitively, the idea is to borrow information from `neighboring' networks (in time or with similar traits) rather than neighboring nodes due to $m$ being much larger than $n^2.$
\item If $m \ll (n\rho_n^2)^{-1/2}$, then one should estimate a graphon $\hat{f}$ for each observed network separately (using an existing approach for single networks, e.g.~\cite{olhede14,Zhang15nbd}), and subsequently perform $n^2$ local regressions, 
one for each pair of vertices with the estimated edge intensities as response i.e., $y_{pq}=[\mathrm{vec}\{\hat{f}_{pql}\}]_{l}$ and network level covariates $[\tilde{z}_{l}]_{l}, l\in [m]$ as regressors.

This follows from~\cref{rmrk:sparsity}, and the achieved rate will depend on the smoothness of the multi-graphon, but the said rate will be affected by the sparsity $\rho_n$; e.g., a graphon estimate with $\sqrt{n}$ blocks (a standard choice for number of blocks~\cite{olhede14}), will converge at most at rate $\sqrt{n\rho_n^2}$~\citep{wolfe13}, much slower than the rate under the $(n,m,\rho_n)$ regime of~\cref{main-clt}.
Note that~\cref{rmrk:sparsity} allows for~\cref{main-clt} to apply to cases where $m \gg n^k\rho_n^{k+1}$ for some $k$.
\end{enumerate}
Therefore, we conclude that~\cref{main-clt} yields optimal rates for local graph statistics in the regimes it applies to.

\begin{Remark}
In the special case of replicated or i.i.d networks, we are concerned with estimation of a common network generating process or the standard two-dimensional graphon $[\bar{f}(x_{i},x_{j}); (i,j)\in [n]\times [n]]$.
Using the aggregated adjacency $\bar{A}=\sum_{l=1}^{m}A_{...}/m$ and the estimated nodal positions as above, we arrive at a special case of~\cref{main-clt} given by~\cref{main-clt-rep} in~\cref{app:Proofs}, which shows that local regression with $y=[\mathrm{vec}\{\bar{A}_{ij}\}]_{i,j}$ as the length $n^2$ response vector with estimated nodal positions $[\hat{x}_{i},\hat{x}_{j}]_{i,j}$ as the regressors, leads to a graphon estimator which enjoys the same properties as the multi-graphon estimator.
\end{Remark}

Further, our algorithm for multi-graphon estimation with kernel regression using a uniform kernel in Step 3. may be viewed as an extension to the neighborhood smoothing approach of~\cite{Zhang15nbd} (designed for single networks) to the setting of multiple networks, with neighborhood identification based on ordinal embedding. In general, our approach has the key advantage of enabling standard nonparametric regression techniques due to the availability of nodal position estimates.

\section{Finite sample performance}\label{sec:sim}
We conducted simulations to study finite sample performance of the proposed two-step multi-graphon estimator
for a synthetic collection of $m$ networks, each on $n$ nodes, generated using functions $f$ with different degrees of smoothness 
and in general, with network-specific variability.
Consider the following three multi-graphon functions:
\begin{enumerate}
\item $f_{1}(x,y;z)=(xy+\beta z^2)/(0.25+\beta z^2)$
\item $f_{2}(x,y;z)= (\exp(-|x-y|/2)+\beta z)/(0.8522+\beta z)$
\item $f_{3}(x,y;z)=r_{az}\mathbb{I}_{a=b}+ r_{abz} \mathbb{I}_{a\neq b}$, where $a=\lceil{kx}\rceil$ and $b=\lceil{ky}\rceil$, $k=2$ (number of blocks), and $r_{az}=0.7-0.0938{\beta z}$, $r_{abz}=0.3+\beta xyz$,
\end{enumerate}
where in each example,
setting $\beta>0$ allows for heterogeneity across network samples $A_{1}, \ldots, A_{m}$ through the network specific positions $z_{1}, \ldots,z_{m}$, whereas $\beta=0$ implies a replicated network sample where $A_{l}$, for each $l\in [m]$ arises from a
common distribution specified by $f(x,y)$. 
Given $f_{j}, j=1,2,3$, heterogeneous networks were generated using $\beta=0.35, 0.5, 0.6$, respectively.
Intuitively, our choice of $\beta$'s is such that it prevents the extremely smooth product kernel ($f_{1}$) from approaching a constant with increasing $z$, and on the other hand, allows the discrete-blockmodel ($f_{3}$) to gain some smoothness across blocks with increasing $z$. 
The structures implied by multi-graphons $f_{1}, f_{2}, f_{3}$ with increasing network positions,
precisely, (a) $z=0.05$, (b) $z=0.5,$ and (c) $z=0.95$, are displayed in~\cref{fig:grs}.

The first multi-graphon $f_{1}$ determines links between pairs of nodes based on the product of node-specific factors $(x,y)$ and with additive network-specific effects via $z$, implying a smooth surface.
 The smooth structure of $f_{1}$ appears ideal for nonparametric regression, however, 
this may also lead to a high variance in nodal position estimates 
 due to similar distances between subsets of nodes.
This example is designed to understand the trade-off between these two aspects.
The second graphon $f_{2}$ has a \textit{Robinsonian} form (e.g., Hubert et al. (1998)) with a peak on the diagonal and decreasing intensity as one moves away from the diagonal on either side. 
The third graphon $f_{3}$ is a simple stochastic blockmodel with $k=2$ blocks in the case of replicated networks i.e., $\beta=0$. 
Clearly, the probability of interaction between nodes across blocks is determined via $r_{abz}$ with the network-specific factor $z$ interacting with node-specific positions $(x,y)$. Thus, across-block probabilities increase non-uniformly across nodes, whereas, within-block probabilities determined via $r_{az}$ (no interaction term) decrease uniformly across all nodes within the two blocks.

Given $(n,m,\rho_nf)$, a generalized random graph sample comprising adjacencies $A_{1},\ldots,A_{m}$ is simulated via independent Bernoulli trials following~\cref{model-def}. We use uniformly distributed latent nodal and network positions i.e., $x_{1},\ldots,x_n\overset{\text{i.i.d}}{\sim} U(0,1)$, and $z_{1},\ldots,z_{m}\overset{\text{i.i.d}}{\sim} \mathrm{U}(0,1)$. Further, network-level covariates $\check{z}_{l}, l \in [m]$ are sampled as noisy measurements of the corresponding unobserved network-specific positions ${z}_{l}, l \in [m]$, i.e.,
\begin{equation}\label{eqn:covv}
\check{z}_{l}=z_{l}+ \epsilon_{l}, \text{  where  } \epsilon_{l} \sim N(0,\sigma^2).
\end{equation}
Clearly, the quality of network-specific covariates $\check{z}$ as measurements of the unobserved latent positions $z$ is a function of the noise variance $\sigma^2$. Since $z\sim \mathrm{U}(0,1)$ in our simulation set-up, we chose $\sigma=0.28$ implying a signal-to-noise ratio (SNR) of $\approx 1$. An SNR of unity implies that the `signal' (covariate) is only as strong as noise and thus allows us to examine the performance and robustness of our method in settings 
where the observed covariates may not be ideal measurements of the true latent network-specific positions. 

\begin{figure}[t]
\centering
\includegraphics[width=0.5\textwidth]{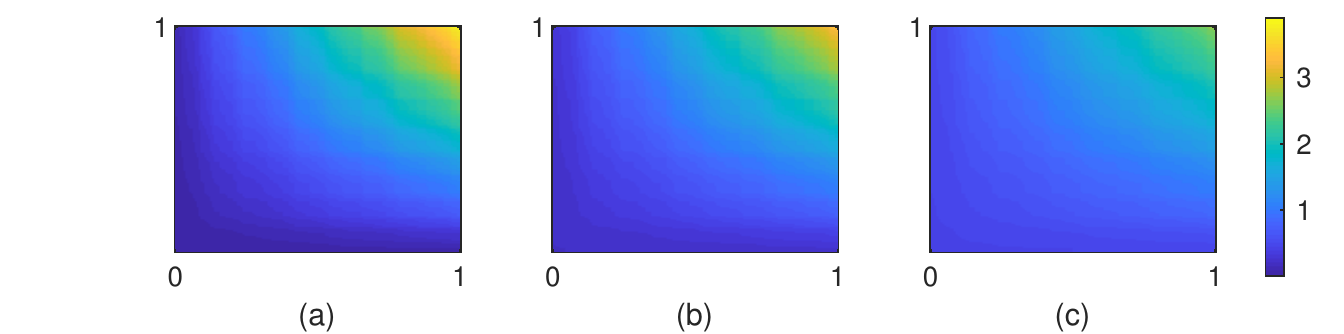}
\bigbreak
\includegraphics[width=0.5\textwidth]{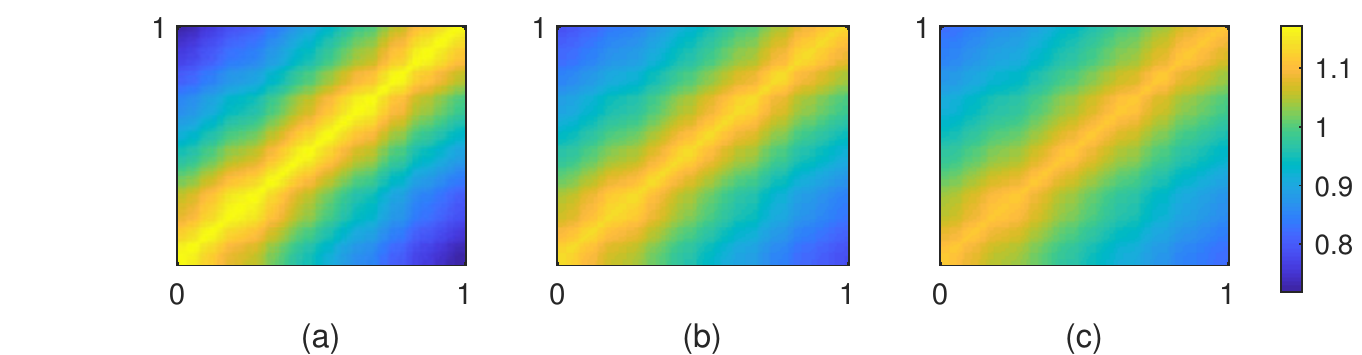}
\bigbreak
\includegraphics[width=0.5\textwidth]{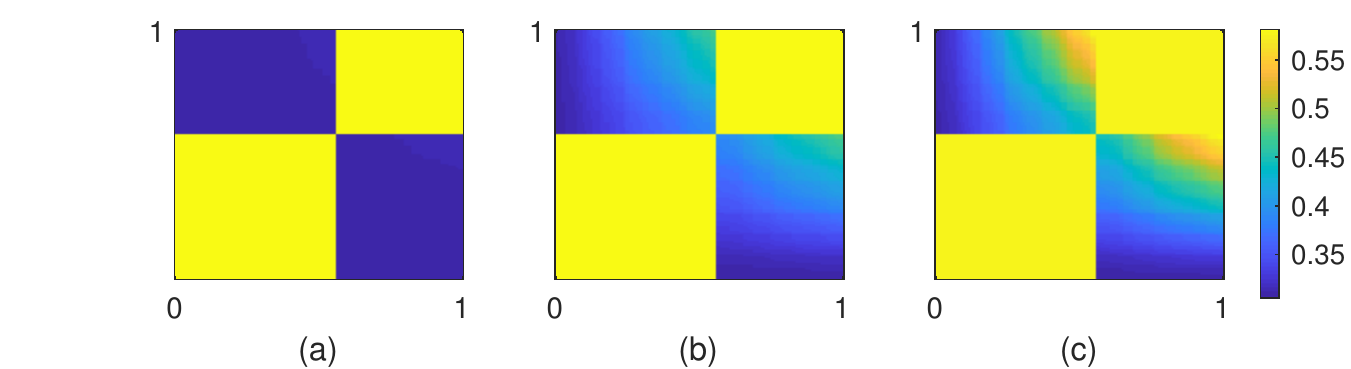}
\caption{Synthetic multi-graphon functions $f_{1}$ (top row), $f_{2}$ (middle row), and $f_{3}$ (bottom row), with increasing network positions $z$ across columns: (a) $f(.,.;z=0.05)$, (b) $f(.,.;z=0.5)$, and (c) $f(.,.;z=0.95)$.
Here $n=150$, $m=100$.}
\label{fig:grs}
\end{figure}

We compare the performance of our two step multi-graphon estimator with competing methods of SBA~\cite{Airoldi13}, SAS~\cite{Chan14}, USVT~\cite{Chatterjee15} and NBS~\cite{Zhang15nbd}. 
The algorithm of SBA achieves graphon estimation from a sample of multiple i.i.d. networks and hence corresponds to our case of replicated networks ($\beta=0$).
In order to compare with SAS, USVT and NBS, designed to work with a single adjacency matrix, we report results obtained using the aggregated adjacency $\bar{A}=\sum_{l=1}^{m}A_{..l}/m$. As far as we are aware, no competing methods exist for nonparametric estimation of the heterogeneous network generating process 
given a collection of independent, non-identically distributed networks.
Noting this, we report comparisons of estimates obtained with our approach under oracle settings 
described below.

The simulations are conducted with a view to understand the performance of our approach for a given choice of nonparametric regression in Step 3. of \cref{alg:meta}.
This may not always lead to the smallest possible MSE using our method but shall give us a view of the general finite sample performance.
We report results obtained with orthogonal series estimation with thin plate regression splines as basis functions~\cite{Wood2003}. This was implemented in {\tt{R}} using \textit{bam} in package \textbf{gam}. In general, our approach can be easily implemented in {\tt{R}} using other smoothing techniques e.g., the Nadaraya Watson estimator which may be implemented using \textit{kernreg} in package \textbf{gplm}.

\subsection{Replicated networks ($\beta=0$)}\label{sec:rep}
A comparison of our approach with existing methods based on MSE averaged over $50$ replications are reported in Table \ref{tab:simnc}; visual comparisons from a single run are displayed in~\cref{fig:grnc_es}.
To interpret performance of our two-step estimator, we 
consider an oracle setting where the oracle informs order-statistics of the true latent node-specific positions
(rather than the exact nodal positions). This information is used to directly construct oracle nodal position estimates denoted as $\hat{x}^{*}_{(i)}=i/(n+1)$~\cite{olhede14,Davison97}, using which nonparametric regression is performed following Step 3. of~\cref{alg:meta}.
We refer to this as the oracle graphon estimator. Note that our oracle set-up 
does not assume the nodal positions to be known and
 is designed to be closer to the actual set-up involving unobserved design points. 

First, comparing MSEs for estimates from 
the proposed method under the non-oracle setting (`Proposed') with the oracle setting (`Proposed$^{*}$'), we note significant differences between the two for $f_{1}$, and negligible difference for $f_{3},$ 
 across all sample sizes $(n,m), n>50$.
This indicates that the first step of latent position estimation performs poorly for $f_{1}$ and extremely well for $f_{3}$. This is what we expect due to the smooth structure of $f_{1}$ leading to subsets of nodes with similar distances and hence resulting in latent position estimates with high variance. The discrete structure of $f_{3},$ on the other hand, allows clearer separation between node pairs corresponding to the two blocks due to significantly different distances, implying robust latent position estimates (as far as blockmodel estimation is concerned). Similarly, comparing oracle and non-oracle MSEs for $f_{2}$ indicate that latent position estimation works reasonably well for these networks. 

In comparison to existing approaches, our actual proposed estimator (non-oracle) leads to the smallest MSE for $f_{1}$
in all cases except when $n=50$. A significant reduction in the MSE of $f_{1}$ is observed as $n$ is increased from $n=50$ to $n=100$, suggesting that $n=50$ nodes are insufficient to perform reliable estimation for $f_{1}.$ For $f_{2}$, our approach consistently leads to the smallest MSE with NBS leading to the second best performance. The relatively higher variance of estimates from our approach is due to high variance in nodal position estimation across replications. As discussed earlier, this is due to the smooth structure of $f_{2}$ (interestingly, heterogeneity across networks reduces the variance in nodal position estimates significantly for $f_{1}$ and $f_{2}$: see results reported in~\cref{sec:heter}). In practice, we recommend re-running the first ordinal embedding step a few times and subsequently selecting the nodal embedding with the lowest stress~\cite{terada14}, as this resulted in a reduced overall variance of estimates from the proposed method. For $f_{3}$, SBA, USVT and NBS lead to the best results with SAS leading to the highest MSE. The relatively higher MSEs from our approach for $f_{3}$ is due to the choice of nonparametric regression, precisely splines as basis functions which are clearly not ideal for estimation of a discrete blockmodel. This is evident from MSEs under the oracle setting, which are also high and comparable to MSEs under the actual non-oracle setting.

We observe that MSEs decrease with increase in $n$ for fixed $m=150$ in all cases, however, this is not necessarily the case with increase in $m$ and $n=150$ fixed, for $f_{1}$ and $f_{2}$. This appears to be an artefact of estimation being performed with a different number of adjacencies (precisely $m$) aggregated in each case, generated from functions with high degree of smoothness ($f_{1}$ and $f_{2}$).

\begin{table}
\caption{Mean squared error ($\pm$ std. dev.) comparisons of graphon estimates, all multiplied by $10^3$, averaged over $50$ replications. Proposed$^{*}$ (proposed under oracle), SBA of~\cite{Airoldi13}, SAS of~\cite{Chan14}, USVT of~\cite{Chatterjee15} and NBS of~\cite{Zhang15nbd}.}
 \label{tab:simnc}
\begin{center}
\resizebox {\textwidth }{!}{
 \begin{tabular}{l l l| l l l l l l }
Graphon & $n$ & $m$ & Proposed$^{*}$ & Proposed & SBA & SAS & USVT & NBS \\
 \hline
\multirow{ 6}{*}{$f_{1}$} &50 & 150 & $26.80 (23.60)$ & $92.00 (82.10)$ & $339.40 (185.10)$ & $83.60 (60.80)$ & $48.10 (55.20)$ & $54.50 (53.20)$ \\
      & 100 & 150 & $11.00 (9.70)$ & $17.20 (18.60)$ & $240.50 (118.20)$ & $63.30 (43.60)$ & $21.40 (28.70)$  & $26.00 (27.30)$ \\
    & 150 & 150 & $8.70 (7.60)$ & $14.30 (16.80)$ & $272.80 (215.20) $ & $26.00 (22.90) $ & $14.70 (20.30)$ & $15.60 (17.60) $\\
   & 150 & 50 & $10.10 (12.70)$ & $17.60 (30.30)$ & $181.90 (214.80)$ & $ 40.40 (46.40)$ & $21.70 (34.80)$ & $24.50 (35.00)$ \\
    & 150 & 100 & $6.20 (4.60)$ & $11.30 (13.70)$ & $186.10 (199.80)$ & $24.40 (17.80) $ & $10.90 (12.20)$ & $12.50 (10.60) $ \\
    & 150 & 150 & $8.70 (7.60)$ & $14.30 (16.80)$ & $272.80 (215.20) $ & $26.00 (22.90) $ & $14.70 (20.30)$ & $15.60 (17.60) $\\
	\hline\\
\multirow{ 6}{*}{$f_{2}$} & 50 & 150 & $2.00 (0.58)$ & $4.70 (3.90)$ & $7.90 (2.80)$ & $11.50 (1.50)$ & $10.50 (1.60)$ & $6.70 (0.83)$   \\
    &100 & 150 & $0.72 (0.23)$ & $1.60 (2.90)$ & $5.40 (3.00)$ & $11.10 (1.20)$ & $10.90 (1.30)$ & $2.70 (0.74)$  \\
    &150 & 150  & $0.43 (0.14)$ & $0.96 (2.20)$ & $6.00 (4.50)$ & $10.40 (0.87)$ & $10.00 (2.40)$ & $1.40 (0.44)$  \\
   &150 & 50 & $0.44 (0.16)$ & $1.00 (1.60)$ & $4.20 (3.20)$ & $10.20 (0.72)$ & $10.20 (1.50)$ &$1.60 (0.38)$  \\
   &  150 & 100 & $0.45(0.13)$ & $0.79 (1.30)$ & $3.00 (3.10)$  & $10.40 (1.00)$ & $9.70 (2.70)$ & $1.50 (0.53)$  \\
   & 150 & 150 & $0.43 (0.14)$ & $0.96 (2.20)$ & $6.00 (4.50)$ & $10.40 (0.87)$ & $10.00 (2.40)$ & $1.40 (0.44)$  \\
	\hline\\
 \multirow{ 6}{*}{$f_{3}$} & 50 & 150 & $9.70 (5.80)$ & $10.90 (7.50)$ & $2.70 (7.20)$ & $14.70 (13.80)$ & 0.86 (2.60) & $0.75 (0.08)$  \\
    &100 & 150 & $8.00 (4.60)$ & $8.30(5.30)$ & $0.25 (0.05)$ & $10.60 (10.70)$ & $0.15 (0.01)$ & $0.27 (0.01)$   \\
    &150 & 150 & $8.00 (3.10)$  & $7.80 (3.06)$ & $0.09 (0.02)$ & $9.70 (12.20)$ & $0.08 (0.005)$ & $0.02 (0.006)$  \\
    &150 & 50 & $7.60 (2.80)$ & $7.90 (3.30)$ & $0.15 (0.009)$  & $13.20 (13.20)$ & $0.15 (0.01)$ & $0.24 (0.01)$  \\
   &  150 & 100 & $8.00 (3.60)$ & $7.80 (3.20)$ & $0.16 (0.04)$ & $11.40 (12.50)$ & $0.10 (0.01)$  & $0.17 (0.01)$  \\
   &150 & 150 & $8.00 (3.10)$ & $7.80 (3.06)$ & $0.09 (0.02)$ & $9.70 (12.20)$ & $0.08 (0.005)$ & $0.02 (0.006)$  \\
\end{tabular}}
\end{center}
\end{table}

\begin{figure}
\centering
\includegraphics[width=0.57\textwidth]{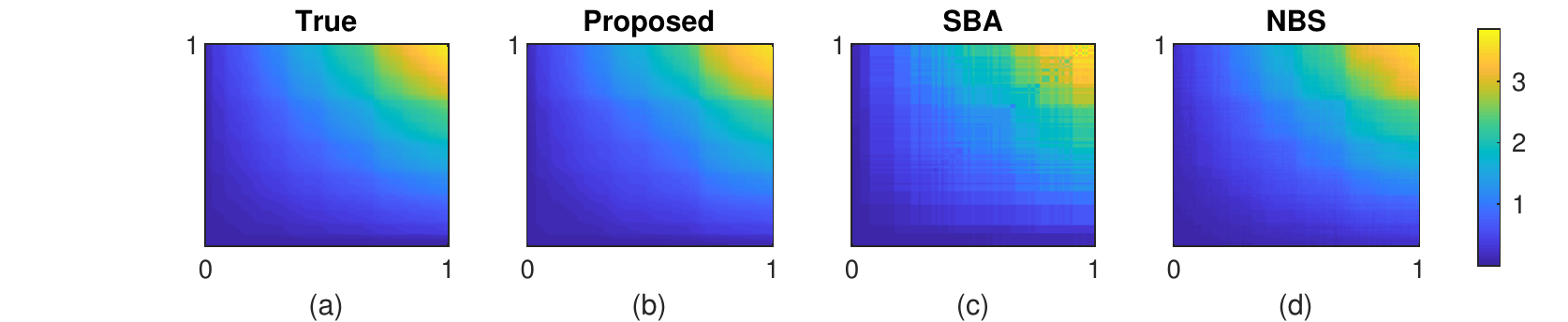}
\bigbreak
\includegraphics[width=.57\textwidth]{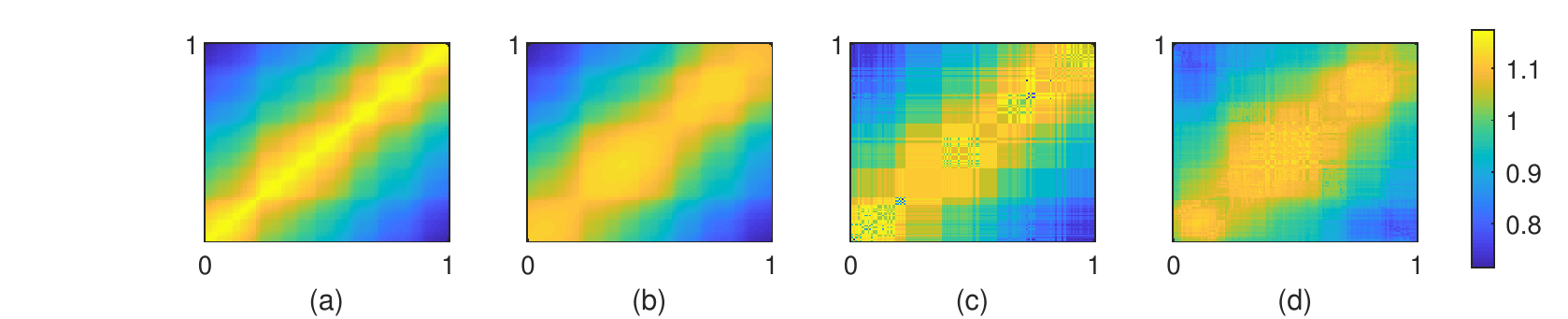}
\bigbreak
\includegraphics[width=.57\textwidth]{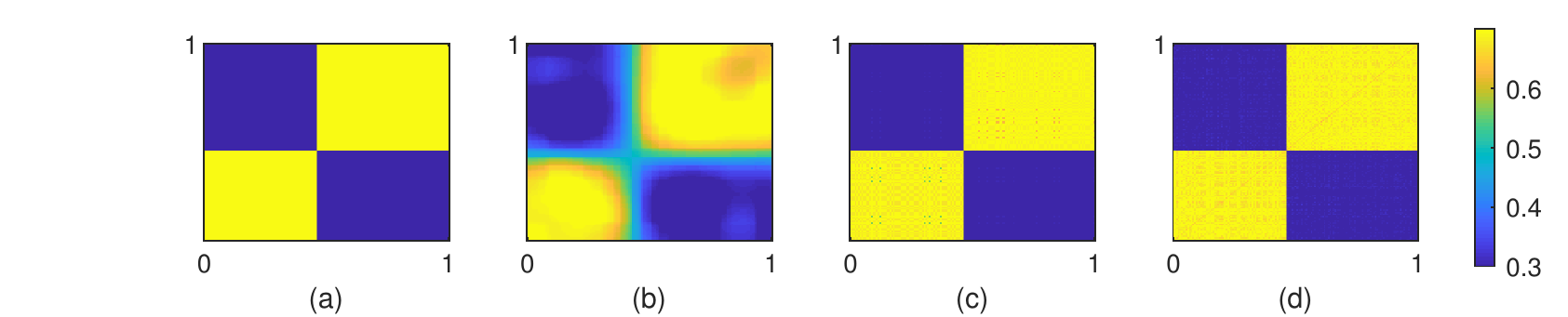} 
\caption{A comparison of estimated graphon matrices for $f_{1},f_{2}, f_{3}$ with $\beta=0$ (replicated networks), in rows $1,2,3$, respectively, where (a) true graphon $f$, (b) proposed methodology, (c) SBA of~\cite{Airoldi13} and (d) NBS of~\cite{Zhang15nbd}. Here $n=150$ and $m=100$.}\label{fig:grnc_es}
\end{figure}

\subsection{Heterogeneous networks ($\beta>0$)}\label{sec:heter}
We report simulation results for the general setting of cross-sectional networks observed with network-level covariates  $\check{z}_{1},\ldots,\check{z}_{m}$. 
Two oracle settings are considered: (i) oracle $1^{\prime}$ informing order statistics $(i)$ of the true node-specific positions $i$, i.e., such that $x_{(1)}\leq x_{(2)}\ldots\leq x_{(n)},$ and the true network-specific positions $z_{1},\ldots,z_{m}$, and (ii) oracle $2^{\prime}$ which again informs order statistics of the true node-specific positions exactly as oracle $1^{\prime}$, however, gives no information on the network specific positions. 
Under both oracles $\hat{x}^{*}_{(i)}=i/(n+1), \forall i \in [n]$ provide oracle estimates of nodal positions, and our algorithm for multi-graphon estimation reduces to nonparametric regression using $\hat{x}^{*}_{(1)},\ldots,\hat{x}^{*}_{(n)}$, and with the exact network positions $z_{1},\ldots,z_{m}$ under oracle $1^{\prime}$, whereas with network-level covariate measurements $\check{z}_{1},\ldots,\check{z}_{m}$ under oracle $2^{\prime}$. Thus, oracle $1^{\prime}$ indicates the best case performance which could be achieved for finite samples if the true set of neighboring nodes were observed, however with imperfect nodal locations $\hat{x}^{*}_{(i)}$.
Oracle $2^{\prime}$ indicates the increase in error (over oracle $1^{\prime}$) resulting from the use of network-level covariates $\check{z}_{l}$ instead of the true network positions ${z}_{l}$.

A comparison of our multi-graphon estimates with existing methods using MSE averaged over $50$ replications is displayed in~\cref{tab:simwc}; visual comparisons of estimates from the proposed method, SBA and NBS are displayed in~\cref{fig:grwc_es}.
Unlike $f_{3},$ for $f_{1}$ and $f_{2}$, the estimated nodal positions implied the same structure as of the true $[f(x_{(i)},x_{(j)};z_{l})]_{i,j,l}$ suggesting that the purified flattened graphon $\bar{f}^{\ast}$ in these cases is identical to the actual flattened graphon $\bar{f}$. Intuitively, this is what we expect given the smooth structure of $f_{1}$, $f_{2}$, and the mixed structure of $f_{3}$. To allow comparisons for $f_{3}$, we plot our proposed estimate of $f_{3}$ with rows and columns permuted to match the true node ordering, i.e. $[f_{3}(x_{(i)},x_{(j)};z_{l})]_{i,j,l}$.

\Cref{tab:simwc} reports MSEs of the multi-graphon array averaged separately for networks generated with weak and strong network-specific effects, precisely, $\{l\in[m]:z_{l}<0.8\}$ and $\{l\in[m]:z_{l}\geq0.8\}$, respectively. From this table, we note the relatively higher MSEs of estimates under oracle $2^{\prime}$ in comparison to oracle $1^{\prime}$. This increase in MSE results from the use of covariates employed as noisy measurements for unobserved network positions, as expected. 
Further, comparing MSEs of oracle $2^{\prime}$ estimates with actual non-oracle estimates across $f_{1}, f_{2}$ and $f_{3}$, it is apparent that $f_{1}$ suffers the most due to relatively poor estimation of latent nodal positions. 
As discussed earlier, this is due to it's extremely smooth structure.
Further, we see that 
our method leads to notably lower MSE for $f_{1}$ in all cases except when $n=50$. Due the smooth structure of $f_1$, $n=50$ nodes prove insufficient for nodal position estimation resulting in a higher MSE.
 For $f_{2}$, our method consistently leads to the smallest MSE, with SBA and NBS leading to the second best performance. 
For $f_{3}$, our proposed estimator is comparable to the best performing approaches of USVT, NBS and SBA for networks with stronger network-specific effects (higher values of $z$) but has a relatively higher MSE otherwise. This is due to the fact that $f_{3}$ is simply a discrete block model for smaller values of $z$ and thus estimation with splines as basis functions even with the true nodal locations does not lead to improved estimation. This is apparent from the MSEs corresponding to the oracle settings of the proposed method which also have higher MSEs for smaller values of $z$. Noting the good performance of NBS for $f_{3}$, we recommend using our approach with kernel regression (e.g. with a uniform kernel) rather than splines, for multi-graphon estimation of networks with discrete structure.

\begin{figure}
\centering
\includegraphics[width=0.57\textwidth]{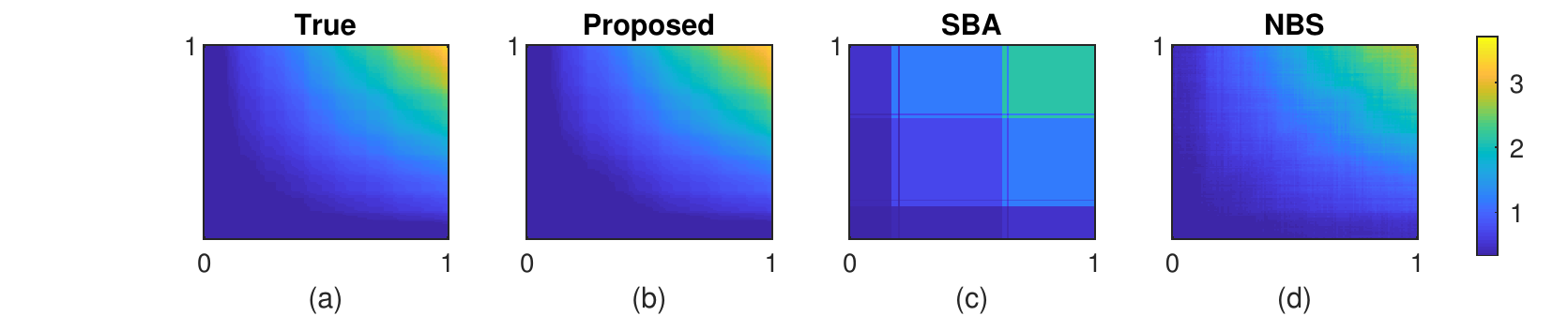}
\bigbreak
\includegraphics[width=.57\textwidth]{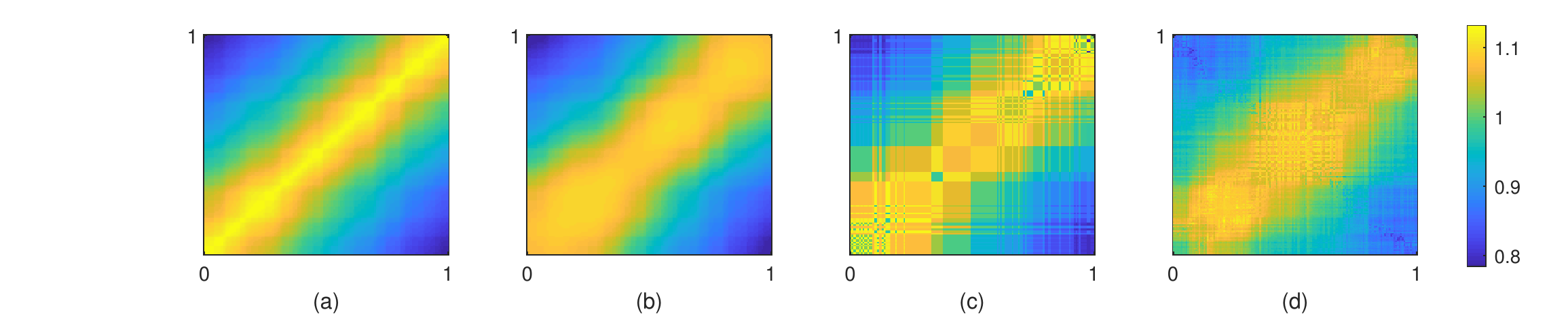}
\bigbreak
\includegraphics[width=.57\textwidth]{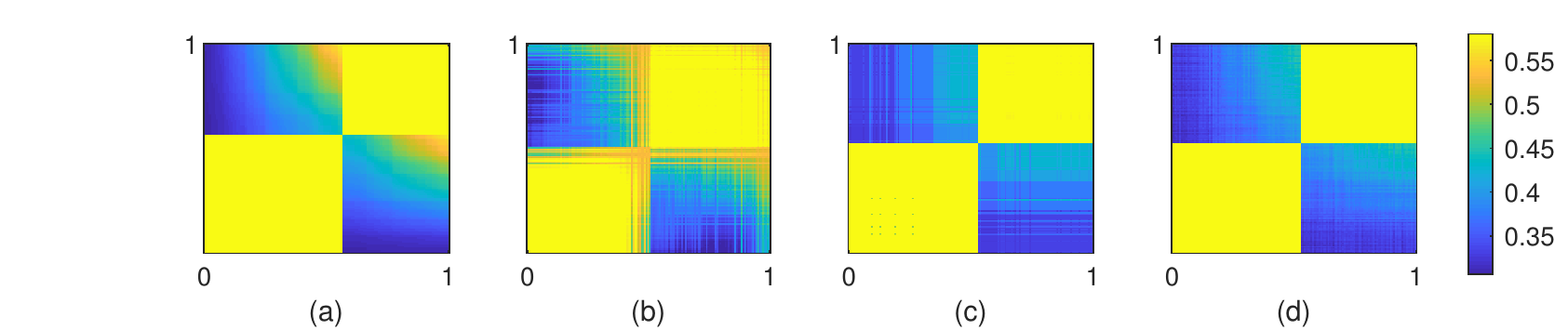}
\caption{Estimated multi-graphon matrices for $f_{1}$ (row 1), $f_{2}$ (row 2), and $f_{3}$ (row 3) with $\beta>0$ (heterogeneous networks) at a fixed network position $z=c$, where (a) true multi-graphon $f(.,.;z=c)$, (b) proposed methodology, (c) SBA of~\cite{Airoldi13} and (d) NBS of~\cite{Zhang15nbd}. Here $n=150$ and $m=100$.}\label{fig:grwc_es}
\end{figure}

\begin{table}
\caption{Mean squared error ($\pm$ std. dev.) comparisons of (multi-)graphon estimates of $f_{1},f_{2},f_{3}$ with $\beta>0$, all multiplied by $10^3$, averaged over $50$ replications. Proposed$^{*}$ (proposed under oracles $1^{\prime}$ and $2^{\prime}$ ), SBA of~\cite{Airoldi13}, SAS of~\cite{Chan14}, USVT of~\cite{Chatterjee15} and NBS of~\cite{Zhang15nbd}. MSE for multi-graphon estimates from the proposed method are averaged for $\{f_{..l}, z_{l}<0.8\}$ and $\{f_{..l}, z_{l}\geq 0.8\}$.
}
 \label{tab:simwc}
\begin{center}
\resizebox {\textwidth }{!}{
 \begin{tabular}{l l l| l l l l l l l l l l }
& & & \multicolumn{2}{c}{Proposed$^{*}1^{\prime}$} & \multicolumn{2}{c}{Proposed$^{*}2^{\prime}$} & \multicolumn{2}{c}{Proposed} & SBA & SAS & USVT & NBS \\
 $f$  & $n$ & $m$ & $z<0.8$ & $z\geq0.8$    &  $z<0.8$ & $z\geq0.8$    & $z<0.8$ & $z\geq0.8$ &  &  &  &  \\
 \hline\\
 \multirow{ 6}{*}{$f_{1}$} & 50 & 150 & $10.00 (3.50)$ & $7.50 (1.60)$ & $12.10 (4.40)$ & $9.30 (2.60)$ & $160.40 (64.10)$ & $83.80 (23.40)$ & $248.50 (178.80)$ & $75.80 (45.40)$ & $50.10 (23.40)$ & $55.10 (28.80)$ \\
    & 100 & 150 & $6.30 (2.40)$ & $3.90 (0.63)$ & $7.90 (3.10)$ & $5.90 (1.70)$ & $14.50 (5.30)$ & $10.20 (2.00)$ & $159.80 (164.10)$ & $ 58.40 (38.80)$ & $37.90 (30.0)$ & $37.40 (27.20)$  \\
     &150 & 150 &$4.40 (1.70)$ & $2.60 (0.38)$ & $5.80 (2.70)$ & $3.80 (1.70)$ & $12.80 (5.80)$ & $7.30 (2.10)$ & $157.30 (151.10)$ & $38.90 (22.70)$ & $32.40 (23.90)$ & $30.80(20.10)$\\
   &150 & 50 & $3.80 (1.80)$ & $2.40 (0.48)$ & $6.50 (4.10)$ & $5.40 (1.60)$ & $17.30 (7.20)$ & $11.50 (2.20)$ & $91.70 (88.60)$ & $40.40 (22.40)$ & $34.50 (26.00)$ & $33.40 (21.30)$ \\
   & 150 & 100 & $4.20 (1.70)$ & $2.40 (0.27)$ & $6.00 (3.30)$ & $4.00 (1.40)$ & $15.00 (6.30)$ & $9.10 (2.10)$ & $130 (132.70)$ & $40.30 (22.70)$ & $34.00 (25.20)$ & $32.10 (21.10)$ \\
   &150 & 150 & $4.40 (1.70)$ & $2.60 (0.38)$ & $5.80 (2.70)$ & $3.80 (1.70)$ & $12.80 (5.80)$ & $7.30 (2.10)$ & $157.30 (151.10)$ & $38.90 (22.70)$ & $32.40 (23.90)$ & $30.80(20.10)$\\
	\hline\\	
\multirow{ 6}{*}{$f_{2}$} & 
 50 & 150 & $1.70 (0.15)$ & $1.60 (0.10)$ & $1.90 (0.31)$ & $1.80 (0.20)$ & $3.60 (0.60)$ & $3.00 (0.33)$ & $4.80 (1.70)$ & $6.70 (1.70)$ & $5.80 (1.40)$ &$4.90 (1.40)$  \\
    &100 & 150 & $0.36 (0.08)$ & $0.30 (0.02)$ & $0.41 (0.15)$ & $0.35 (0.04)$ & $0.46 (0.16)$ & $0.38 (0.04)$ & $2.90 (1.80)$ & $6.30 (1.70)$ & $5.90 (1.50) $ & $2.50 (1.00)$ \\
    &150 & 150 & $0.35 (0.08)$ & $0.28 (0.02)$ & $0.40 (0.14)$ & $0.35(0.04)$ & $0.41 (0.14)$ & $0.36 (0.04)$ & $3.00 (1.90)$ & $5.60 (1.50)$ & $5.50 (1.40)$ & $1.50 (0.74)$ \\
   &150 & 50 & $0.36(0.08)$ & $0.30 (0.03)$ & $0.40 (0.14)$ & $0.31(0.03)$ & $0.42 (0.14)$ & $0.33 (0.04)$ & $2.40 (1.30)$ & $5.60 (1.50)$ & $5.50 (1.30)$ & $1.60 (0.78)$ \\
   &150 & 100 & $0.36 (0.08)$ & $0.30 (0.02)$ & $0.41 (0.15)$ & $0.33 (0.03) $ & $0.46 (0.16)$ & $0.38 (0.04)$ & $1.80 (1.30)$ & $5.60 (1.50)$ & $5.50 (1.30)$ & $1.70 (0.85)$  \\
    &150 & 150 & $0.35 (0.08)$ & $0.28 (0.02)$ & $0.40 (0.14)$ & $0.35(0.04)$ & $0.41 (0.14)$ & $0.36 (0.04)$ & $3.00 (1.90)$ & $5.60 (1.50)$ & $5.50 (1.40)$ & $1.50 (0.74)$ 
 \\
	\hline\\
\multirow{ 6}{*}{$f_{3}$} & 50 & 150 & $6.40 (3.30)$ & $2.70 (0.55)$ & $6.70 (3.40)$ & $2.90 (0.13)$ & $7.80 (4.00)$ & $3.30 (0.70)$ & $2.90 (2.00)$ & $13.40 (9.50)$ & $2.40 (4.40)$ & $2.20 (1.80)$ \\
    & 100 & 150 & $4.70 (2.60)$ & $2.10 (0.55)$ & $4.90 (2.70)$ & $2.30 (0.18)$ & $5.80 (3.40)$ & $2.50 (0.63)$ & $51.30 (6.50)$ & $53.60 (5.90)$ & $50.70 (6.50)$ & $51.00 (6.50)$  \\
    &150 & 150 & $4.00 (2.10)$ & $1.80 (0.44)$ & $4.10 (2.40)$ & $1.90 (0.46)$ & $4.90 (2.90)$ & $2.10 (0.47)$ & $2.40 (1.70)$ & $13.90 (9.00)$ & $2.20 (1.60)$ & $2.20 (1.59)$  \\
   &150 & 50 & $4.20(2.20)$ & $1.90 (0.45)$ & $4.40 (2.30)$ & $2.10 (0.27)$ & $6.20 (3.20)$ & $3.00 (0.64)$ & $2.40 (1.70)$ & $15.80 (9.90)$ & $2.10 (1.70)$ & $2.20 (1.70)$ \\
    & 150 & 100 & $4.00 (2.10)$ & $1.80 (0.39)$ & $4.20 (2.30)$ & $2.00 (0.41)$ & $5.40 (3.00)$ & $2.50 (0.45)$ & $2.30 (1.90)$ & $13.80 (8.70)$ & $1.90 (1.60)$ & $2.00 (1.60)$  \\
    &150 & 150 & $4.00 (2.10)$ & $1.80 (0.44)$ & $4.10 (2.40)$ & $1.90 (0.46)$ & $4.90 (2.90)$ & $2.10 (0.47)$ & $2.40 (1.70)$ & $13.90 (9.00)$ & $2.20 (1.60)$ & $2.20 (1.59)$  \\
\end{tabular}}
\end{center}
\end{table}

\section{Two data examples}\label{sec:data}
We illustrate the performance of the proposed multi-graphon estimator using two publicly available data sets: 
 (i) a dynamic contact network of ants~\cite{Mersch13}, and (ii) a human connectome dataset named Templeton-114~\cite{Roncal2013, Kiar2016ndmg}.
\subsection{Dynamic contact network of ants}
With a view to understand division of labor among ant workers, movements in six colonies of the ant \textit{Camponotus fellah} were tracked over a period of $41$ days with network interactions between any two ant workers (nodes) determined by their physical proximity (see SI~\cite{Mersch13} for more details).
We illustrate our methodology using data from colony $3$ which has the maximum number of overlapping ant workers (precisely $n=96$) over the duration of $m=41$ days. This leads to $41$ adjacency matrices, each of size $96 \times 96$, i.e., $\{A_{ijt}, i,j\in[n]\times [n], t\in [m]\}$ with $A_{ijt}$ denoting the count of interactions between ants $i$ and $j$ on day $t$. Using behavioral signatures of ant workers such as visits to the brood, foraging trips and visits to the rubbish pile, each ant worker is also recorded to be 
a nurse (N), or a forager (F) or a cleaner (C), respectively, across four consecutive time periods, each of approximately $10$ days~\cite{Mersch13}. 

\begin{figure}[t]
\centering
\includegraphics[width=0.6\textwidth]{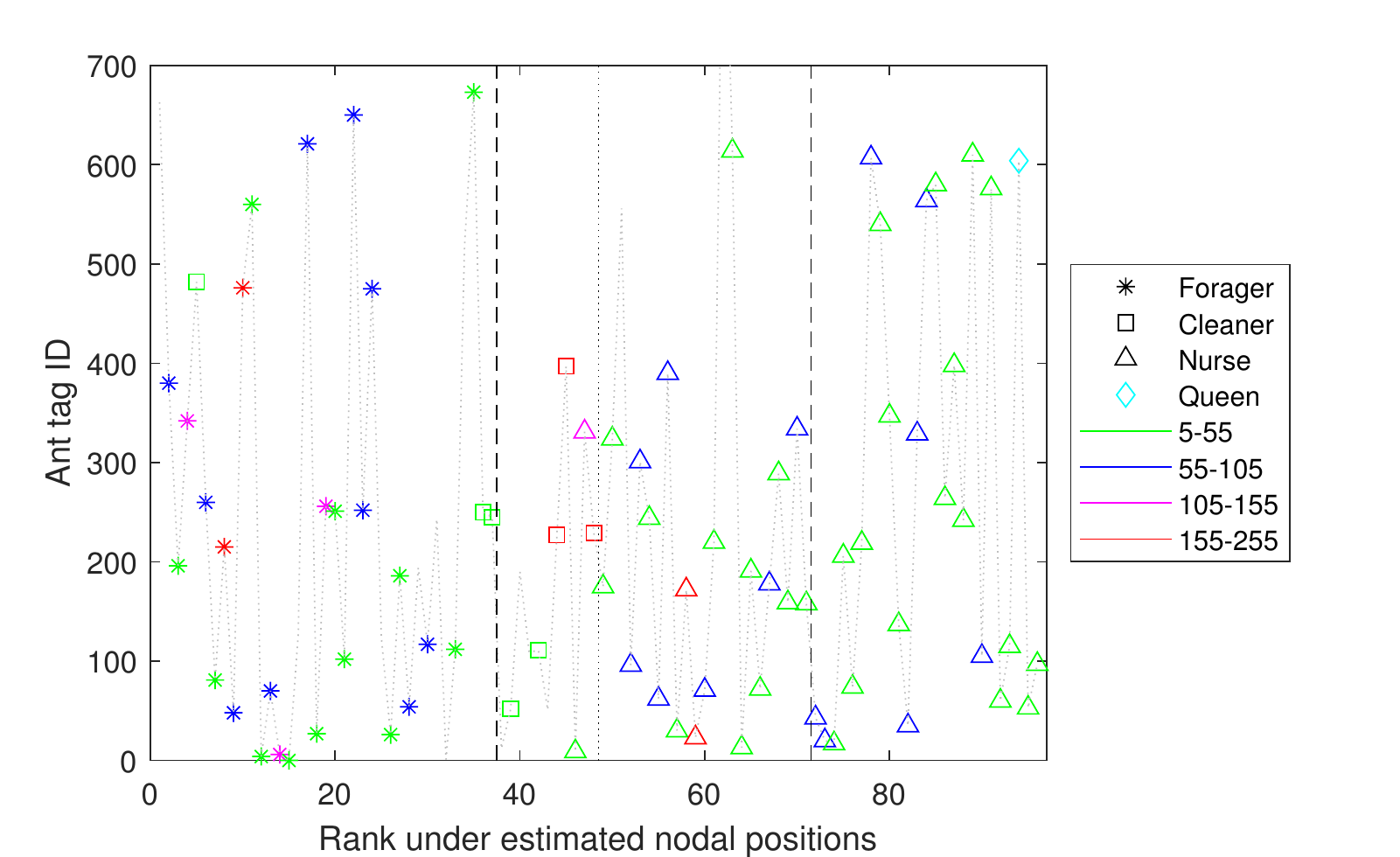}
\caption{Ant tag ids (y-axis) against their ranks (x-axis) based on the estimated nodal positions.
The dotted line in the middle is plotted for reference; dashed lines are used to interpret results (details in text). See legend for occupation and age group of each ant.}
\label{fig:estdpermAnts3}
\end{figure}

\cref{fig:estdpermAnts3} displays a rearrangement of the ant workers (nodes) sorted by increasing nodal positions (x-axis) estimated following the proposed~\cref{alg:meta}, against their original ant indices as recorded in the data set. 
Ant worker attributes such as their majority occupation (over the four time periods) and age group are also displayed (see legend in \cref{fig:estdpermAnts3}). The dotted line in the middle, plotted for reference, divides the set of nodes into two equal groups on each side. We note a clear spatial segregation with the 
forager ants (`$\ast$') always positioned to the left of the
 dotted line and a majority of the nurse ants (`$\triangle$') positioned to the right of the dotted line;
cleaner ants (`$\square$') are clearly positioned in between these two larger occupational groups.
This suggests that ant workers with the same occupation were estimated to be closer to each other than ants with different occupations via the distance estimation approach. Such a spatial segregation is clearly not implied by the age attribute, as ants from the same age group are not always positioned closer to each other. 
Further, we note that the queen ant (`$\textcolor{blue}{\diamond}$') is estimated to be spatially closer to the group of nurses (`$\triangle$') and is positioned far from cleaner and forager groups. This is in agreement with the well-known behavior of queen ants who are solely responsible for reproduction. 

\begin{figure}
\centering
\includegraphics[width=\textwidth]{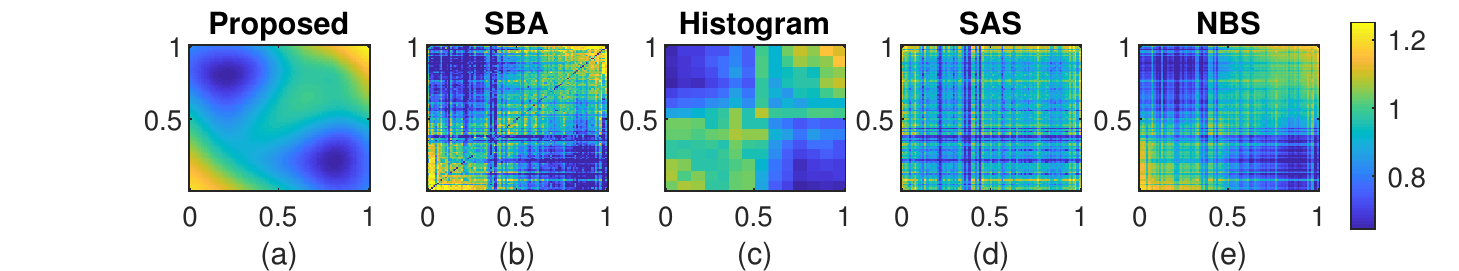}
\caption{Graphon estimates $\hat{f}^{1/4}$ for contact network of ants, assuming i.i.d networks over time, where: (a) the proposed methodology, (b) SBA of~\cite{Airoldi13}, (c) network histogram of~\cite{olhede14}, (d) SAS of~\cite{Chan14}, and (e) NBS of~\cite{Zhang15nbd}. For comparison, estimates from SBA, SAS, and NBS were re-arranged to correspond to increasing nodal position estimates from our algorithm.
The power root stabilizes the variance of the intensity displayed using the color spectrum and is solely for ease of visualization.}
\label{fig:kernelestAnts3}
\end{figure}

We first study comparisons for graphon estimates obtained under the assumption of an i.i.d (or replicated) collection of networks over time.
 The result from our approach and comparisons with existing techniques applied to the aggregated adjacency $\bar{A}$ 
are displayed in~\cref{fig:kernelestAnts3}, where for convenience of  
comparisons, rearranged matrix estimates of SBA, SAS and NBS, with nodes sorted by increasing nodal positions estimated from our approach, are shown.
We see a good agreement between the proposed estimator and all other methods except SAS, 
with high intensity regions at the edges of the main diagonal (corresponding to subgroups of forager and nurse nodes); and relatively low intensity of connection along the off-diagonal.
\begin{figure}[hbt!]
\centering
\includegraphics[width=0.50\textwidth]{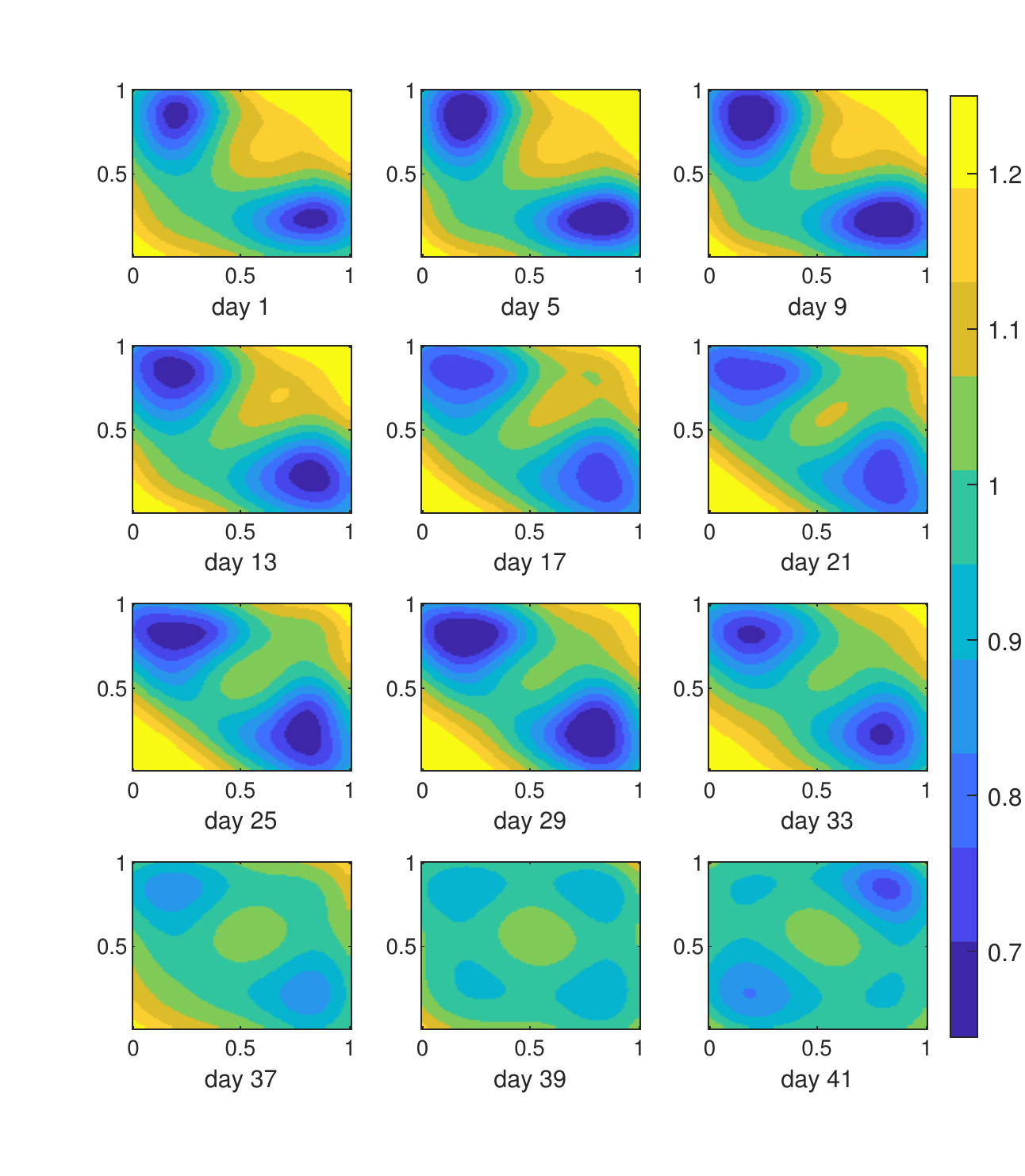}
\caption{Multi-graphon matrix estimates $\hat{f}^{1/4}(.,.,t_{l}), t_{l}=l/41, l \in [41]$ using the proposed method for day $l$ shown along the x-axis. The power root stabilizes the variance of the intensity displayed using the color spectrum and is solely for ease of visualization.}
\label{fig:Gr_est_Ant3}
\end{figure}

Existing studies on organizational behavior of ants such as~\cite{Miele17,Mersch13} and references therein, suggest that the assumption of identically distributed networks over time is unrealistic for such ant interaction data. Our multi-graphon estimates displayed in~\cref{fig:Gr_est_Ant3} indicate that this is indeed the case as newer structural features become apparent when estimation is performed without assuming identically distributed networks over time. \Cref{fig:Gr_est_Ant3} shows how the network structure changes over the duration of $41$ days with a significant decrease in intensity of interactions towards the end of the period, particularly, beyond day $37$.
More precisely, notably high intensity of interactions are observed until day $33$ of the experiment for a small proportion of nurse ant workers (top right corner of multi-graphon estimates), and forager ant workers (bottom left corner of multi-graphon estimates), beyond which intensity of interactions in these regions begins to decrease. 
In fact, the highest intensity of interaction by the end of the experiment is between cleaners and a small subset of forager and nurse ants: precisely the set of ant workers positioned 
within the dashed lines displayed in~\cref{fig:estdpermAnts3}. 

\begin{figure}[hbt!]
\centering
\includegraphics[width=0.55\textwidth]{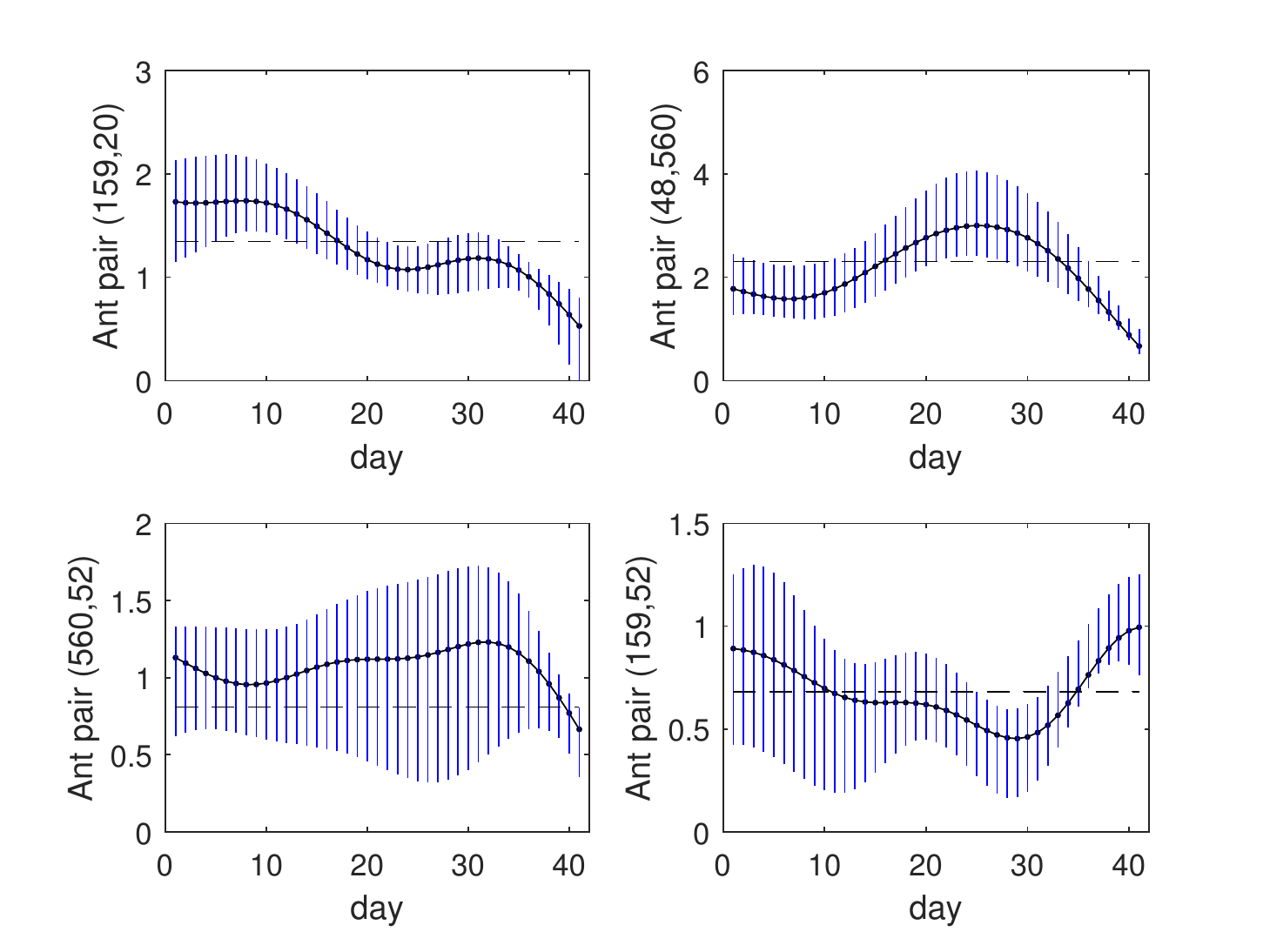}
\caption{$95\%$ confidence intervals for estimated intensity of pairwise interactions between ant pairs over time with original ant tag ids and occupations given by: first row, left: $(159,20)$, N-N; first row, right: $(48,560)$, F-F; second row, left: $(560,52)$, F-C; second row, right $(159,52)$, N-C.}
\label{fig:Ant3_ci}
\end{figure}

Estimates of pairwise intensity of interactions for four pairs of ants and the corresponding $95\%$ confidence intervals over time, obtained via subsampling bootstrap are displayed in~\cref{fig:Ant3_ci}. Significant differences in interaction behavior over time periods are easily identified for all pairs of ants except the forager-cleaner (F-C) ant pair $(560,52)$ (or $(15,44)$ in the estimated ordering), where confidence bars overlap across all days. For example, for the forager-forager (F-F) ant pair $(48,560)$ displayed in the second subplot, the estimated intensity of interaction over the first $10$ days is significantly lower in comparison with intensity over days $20-30$, decreasing again beyond day $36$. 
It is interesting to note that the intensity of interaction between the nurse-cleaner (N-C) pair $(159,52)$ over days $36-41$ is significantly higher than the intensity over days $25-31$, suggesting a change in behavior somewhere between these two time periods.
Noting the occupation of the nurse ant worker $159$, we find that it is recorded to be a nurse in the first three periods of data collection (precisely, days $1-31$) and a cleaner for the last period spanning days $32-41$. This could be a possible explanation for the significant increase in intensity between the N-C pair $(159,52)$ with days $32-35$ corresponding to a transition period for a change in occupation from a nurse to a cleaner.

\subsection{Human connectome data}
This data set comprises of structural brain networks on $n=116$ brain regions, known as regions of interest (ROIs), observed for $m=256$ subjects. For each subject $l\in [m]$, the existence of an edge between brain regions $i\in [n]$ and $j \in [n]$ is determined from multimodal magnetic resonance imaging data~\cite{Kiar2016ndmg}, and corresponds to the presence of atleast one white matter fiber connecting the two regions, (see~\cite{gray2012magnetic,Roncal2013} for details).
The brain regions considered in this data set are given by the Automated Anatomical Labeling (AAL 116) cortical atlas~\cite{tzourio2002}. 
This data set also includes a creativity score for each subject, measured via the composite creativity index (CCI) of~\cite{Jung2010}. 
The CCI scores are informed by 
ranks assigned to the creative products of each subject by three independent judges.

\begin{figure}[hbt!]
\centering
\includegraphics[width=0.56\textwidth]{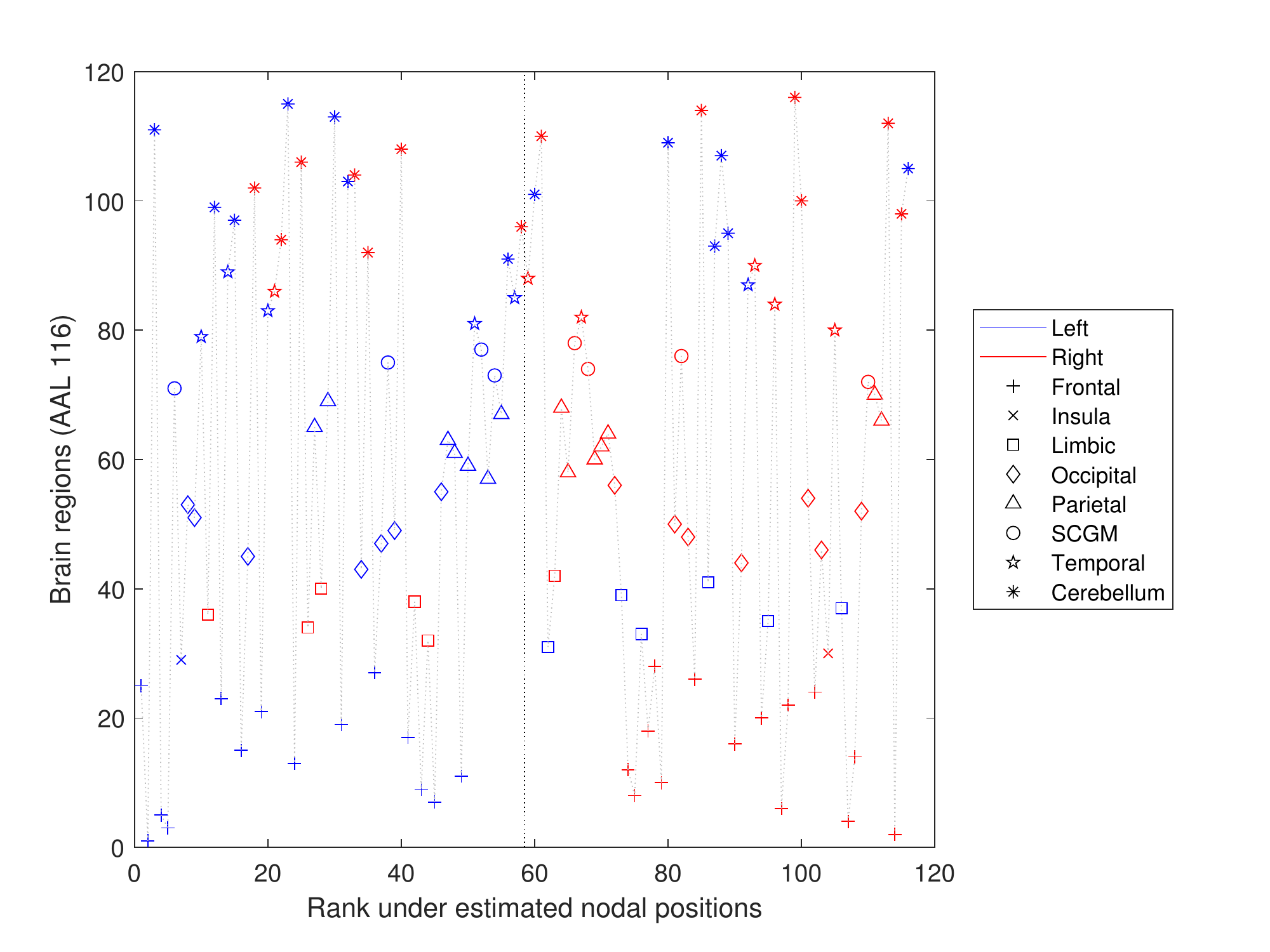}
\caption{Brain regions (indices from AAL116 atlas, y-axis) 
against their ranks (x-axis) based on the estimated nodal positions.
The dotted line in the middle is plotted for reference (see text).}
\label{fig:permxihatBrain}
\end{figure}

\Cref{fig:permxihatBrain} displays brain regions sorted by increasing nodal embedding position estimates (x-axis) from our proposed algorithm against their actual AAL116 indices (y-axis). The membership of each brain region in one of the two hemispheres--left or right, and one of the eight lobes-- Frontal, Insular, Limbic, Occipital, Parietal, SCGM, Temporal, Cerebellum, is also displayed (see legend). The dotted line in the middle, plotted for reference, divides the set of nodes into two equal groups on each side. 
Noting the hemisphere (and lobe) membership of nodes on the left and right side of the dotted line in~\cref{fig:permxihatBrain}, we observe that ROIs belonging to the left and right hemispheres, lie to the left and right of the dotted line respectively, for members of all lobes except Limbic ($\Box$) and Cerebellum ($\ast$). Since ROIs belonging to the left and right hemispheres, lie to the left of the origin (negative x-axis) and right of the origin (positive x-axis) in the standard MNI space, respectively, it suggests that nodal positions estimated via our algorithm, for a majority of brain regions are coarsely aligned with their actual spatial coordinates along the first dimension (or x-coordinates). Further we see that nodes from the Limbic lobe are embedded such that its members from the left (right) hemisphere are positioned to the right (left) of the dotted line (centre), whereas for nodes from the Cerebellum lobe, left and right hemisphere members are mixed on either side of the dotted line. 

A comparison of our graphon estimate under the replicated network assumption with estimates from SBA of~\cite{Airoldi13}, network histogram of~\cite{olhede14}, SAS of~\cite{Chan14}, USVT of~\cite{Chatterjee15}, and NBS of~\cite{Zhang15nbd}, 
 is displayed in~\cref{fig:est_nc}. Clearly, network structure is only apparent from the proposed estimate and network histogram of~\cite{olhede14}, a graphon function estimator.
The lack of structural visibility in estimates from all other methods is due to the absence of a meaningful ordering on the set of nodes, typically achieved using node-specific covariates which are not observed in this dataset.
To allow comparison, re-arranged matrix estimates of SBA, SAS, USVT, and NBS with nodes sorted by increasing nodal position estimates from our algorithm, are displayed in~\cref{fig:est_nc1}. Overall, at a coarse level we see a good agreement between estimates from all methods except SAS.
Our estimator clearly indicates assortative community-like behavior for nodes positioned at the two extremes, precisely, nodes with estimated indices $1-35$ and $81-116$ (x-axis of ~\cref{fig:permxihatBrain}).
We see a very high intensity of interaction for nodes within these two groups and very low intensity of interaction across the two groups, and clearly, a relatively weaker community structure for nodes positioned in the middle (node indices $48-74$). 

Assuming the collection of networks from subjects to be non-identically distributed~\cref{fig:Gr_est_CCI} displays multi-graphon estimates obtained with network-level covariates $\check{z}_{l}$ as the normalized (max norm) CCI score of subject $l\in [m]$. 
From these plots it is evident that network structure changes as we go from subjects expressing low creativity to high creativity,
e.g., with significantly different intensities of interactions along the main diagonal with increasing CCI. 
For a closer inspection~\cref{fig:CI} displays $95\%$ confidence intervals and estimates of pairwise intensity of interaction for four different ROI pairs, as a function of
CCI scores. 
An immediate observation is that we may not always observe (a significant) increase in intensity of interaction with increase in creativity levels measured via CCI. This is visible from~\cref{fig:CI} where red bars indicate similar intensities with increase in CCI in subplots (a), (b), (d) and a significant decrease in intensity with increase in CCI in subplot (c).
Secondly, these plots suggest that the CCI score threshold for partitioning network samples into `low' and `high' creativity groups (e.g.~\cite{DD18}) may vary depending on the ROI pairs of interest.
Based on these findings, in practice, we recommend fixing the set of ROIs of interest to the practitioner, to infer a meaningful grouping of network samples
 before performing tasks such as 
identifying a subset of edges which provide evidence of change across low and high creativity groups or classification into categories constructed artificially from continuous-valued information, for example as considered in~\cite{DD18}.
This is crucial as otherwise aggregated behavior of each partition may not be representative of the actual behavior due to significant differences within the chosen subset of network samples, resulting in misleading conclusions.

\begin{figure}[hbt!]
\centering
\includegraphics[width=0.95\textwidth]{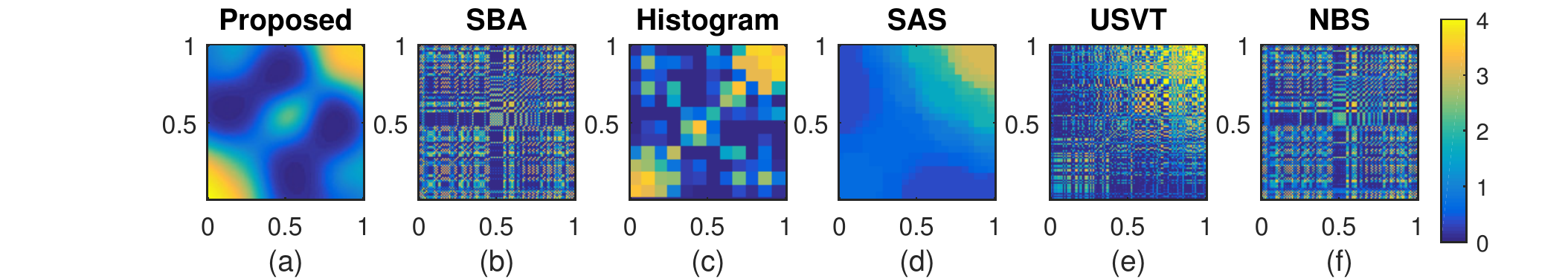}
\caption{Graphon matrix estimates $\hat{f}^{1/2}$ for connectome data, assuming i.i.d networks over subjects, where: (a) the proposed methodology, (b) SBA of~\cite{Airoldi13}, (c) network histogram of~\cite{olhede14}, (d) SAS of~\cite{Chan14}, (e) USVT of~\cite{Chatterjee15} and (f) NBS of~\cite{Zhang15nbd}.}
\label{fig:est_nc}
\bigbreak
\includegraphics[width=.62\textwidth]{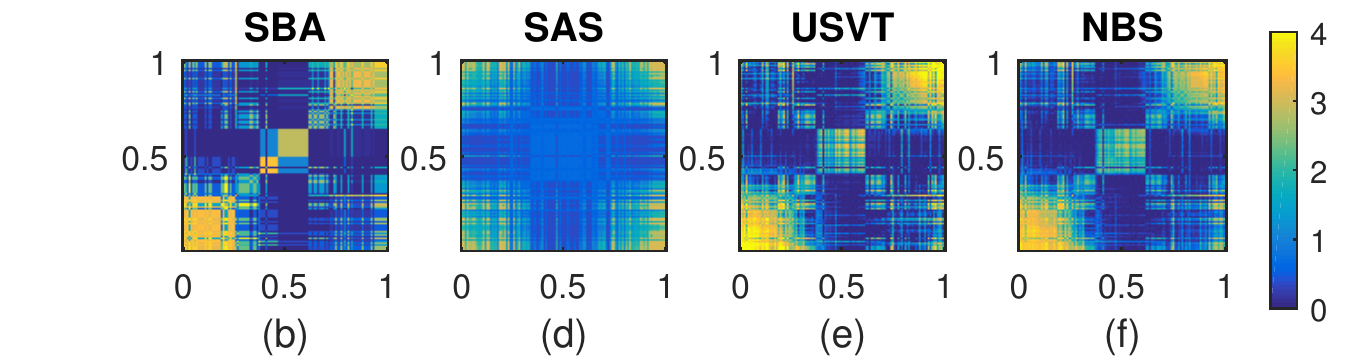}
\vspace{-4mm}
\caption{Re-arranged graphon matrix estimates (from above)
 (b) SBA (d) SAS (e) USVT and (f) NBS, with nodes sorted by increasing nodal positions estimated from our algorithm.}
\label{fig:est_nc1}
\end{figure}

\subsubsection{Application to resampling networks}
Network summary statistics such as 
triangle frequency, average
path length, transitivity, network edge density are of great practical interest and have been studied in the context of brain network organisation and creativity, for example as in~\cite{DD18, Maugis17, bullmore12}.
According to~\cite{bullmore12}, structural brain networks of highly creative individuals are found to exhibit small-world phenomenon with high triangle frequency, low average path length, high edge density, and high transitivity. 
To check if the small-world behavior for high creativity individuals suggested by previous studies, is a feature implied by our multi-graphon estimate, we study network summaries for samples ${A}$ generated using the estimated multi-graphon $\hat{f}$. 
For a given normalized creativity score $\check{z}\in (0,1),$ we generated $B$ networks $A_{1}(\check{z}),\ldots,A_{B}(\check{z})$, each of size $n \times n$ ($n=116$), as independent Bernoulli trials where, $A_{ij}(\check{z})\sim \text{Bernoulli}(\hat{\rho}_n\hat{f}(\hat{x}_{i},\hat{x}_{j};\check{z}))$, for $(i,j) \in [n]\times [n]$.
Subsequently, the four network statistics --triangle frequency, average
path length, transitivity, and network edge density were computed for each $A_{1}(\check{z}),\ldots,A_{B}(\check{z})$. 
\Cref{fig:smallworld_CCI} displays the corresponding $95\%$ confidence intervals for these four network statistics with increasing creativity scores, obtained using $B=10000$ networks. 
From these plots, differences in network statistics across creativity levels are apparent.
Further, we see that triangle frequency, edge density, and transitivity are significantly higher, whereas average path length, is significantly lower for subjects with high creativity in comparison to those with low creativity, confirming the small-world phenomenon for high creativity brains~\cite{bullmore12}.

\begin{figure}[hbt!]
\centering
\includegraphics[width=0.5\textwidth]{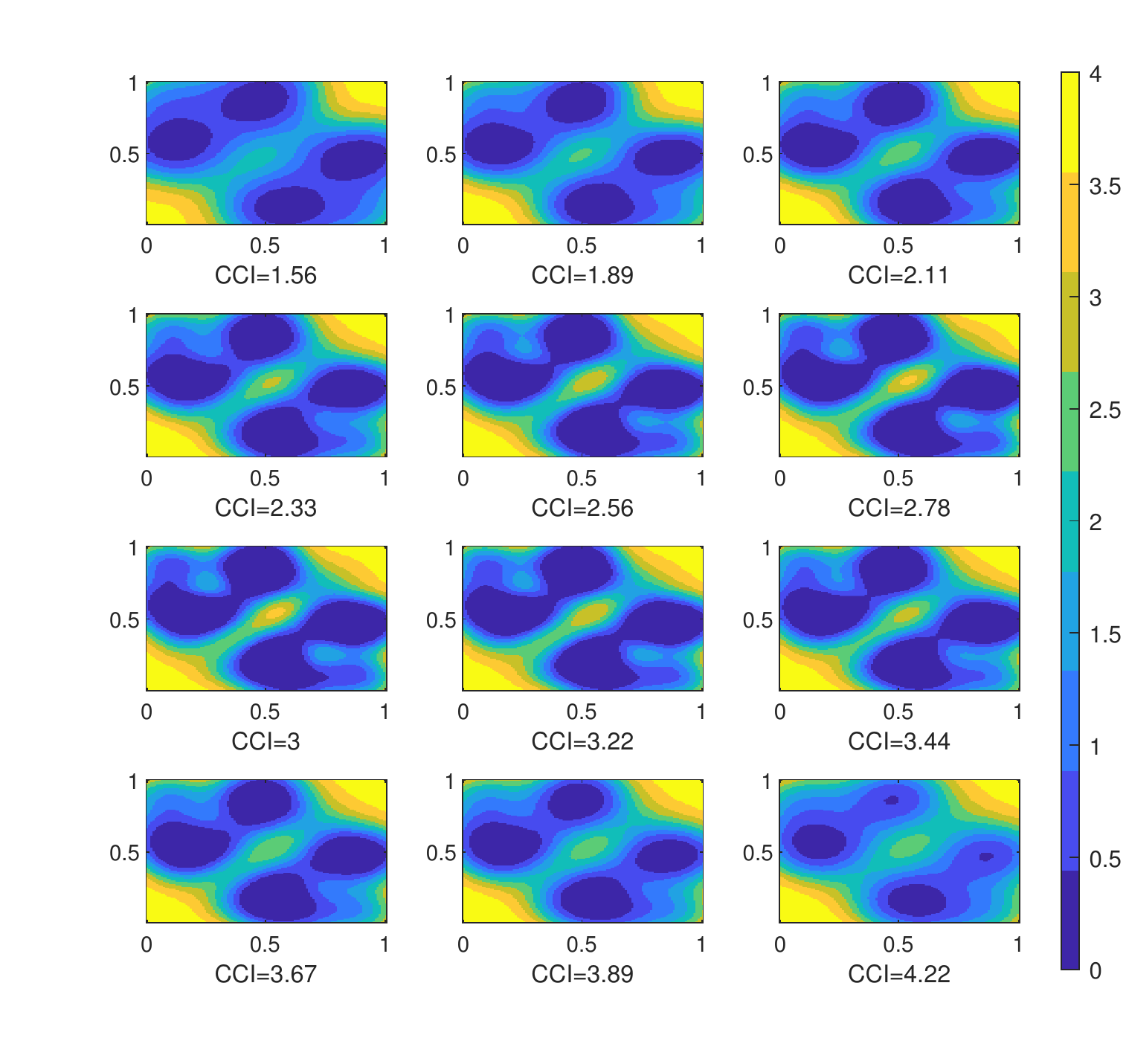}
\caption{Multi-graphon matrix estimates $\hat{f}^{1/2}(:,:,{z}_{l})$ using the proposed method for increasing CCI scores shown along the x-axis. The power root stabilizes the variance of the intensity displayed using the color spectrum and is solely for ease of visualization.}
\label{fig:Gr_est_CCI}
\end{figure}

\begin{figure}[hbt!]
\centering
\includegraphics[width=0.52\textwidth]{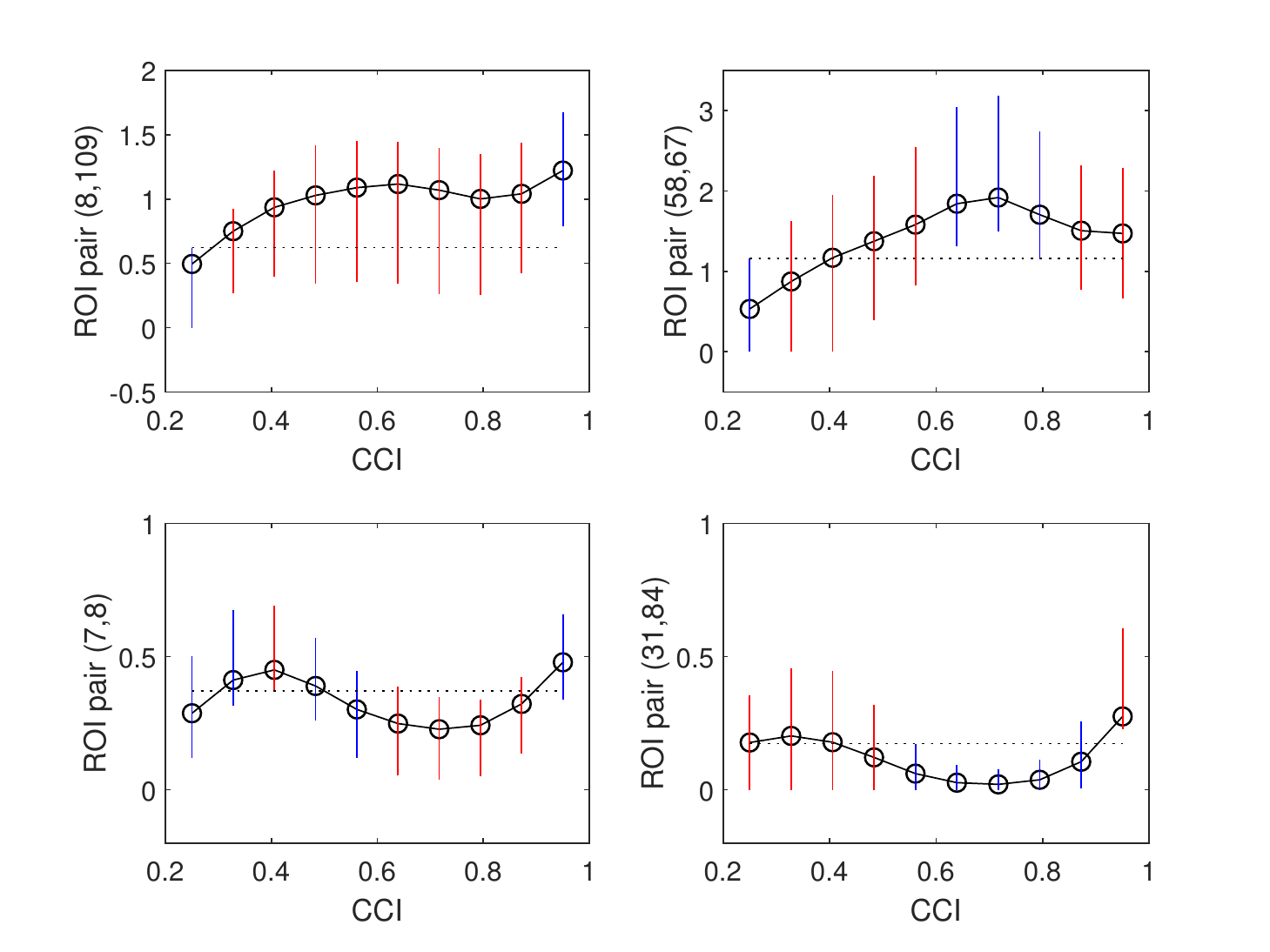} 
\caption{Multi-graphon estimates $\hat{f}^{1/2}$ and $95\%$ confidence intervals for node pairs (indices correspond to AAL116 atlas) with increasing CCI scores. First row, left: $(8,109)\equiv$ ( Frontal Mid.(R), Vermis12), first row, right: $(58,67)\equiv$ (Postcentral(R), Precuneus(L)), second row, left: $(7,8)\equiv$ (Frontal Mid(L) , Frontal Mid(R)) and second row, right: $(31,84)\equiv$ (Cingulum Ant.(L),Temporal Pole Sup(R)). The red bars are used to visualize changes (or no change) in intensity with increasing CCI (see text for details).}
\label{fig:CI}
\end{figure}

\begin{figure}[t]
\centering
\includegraphics[width=0.55\textwidth]{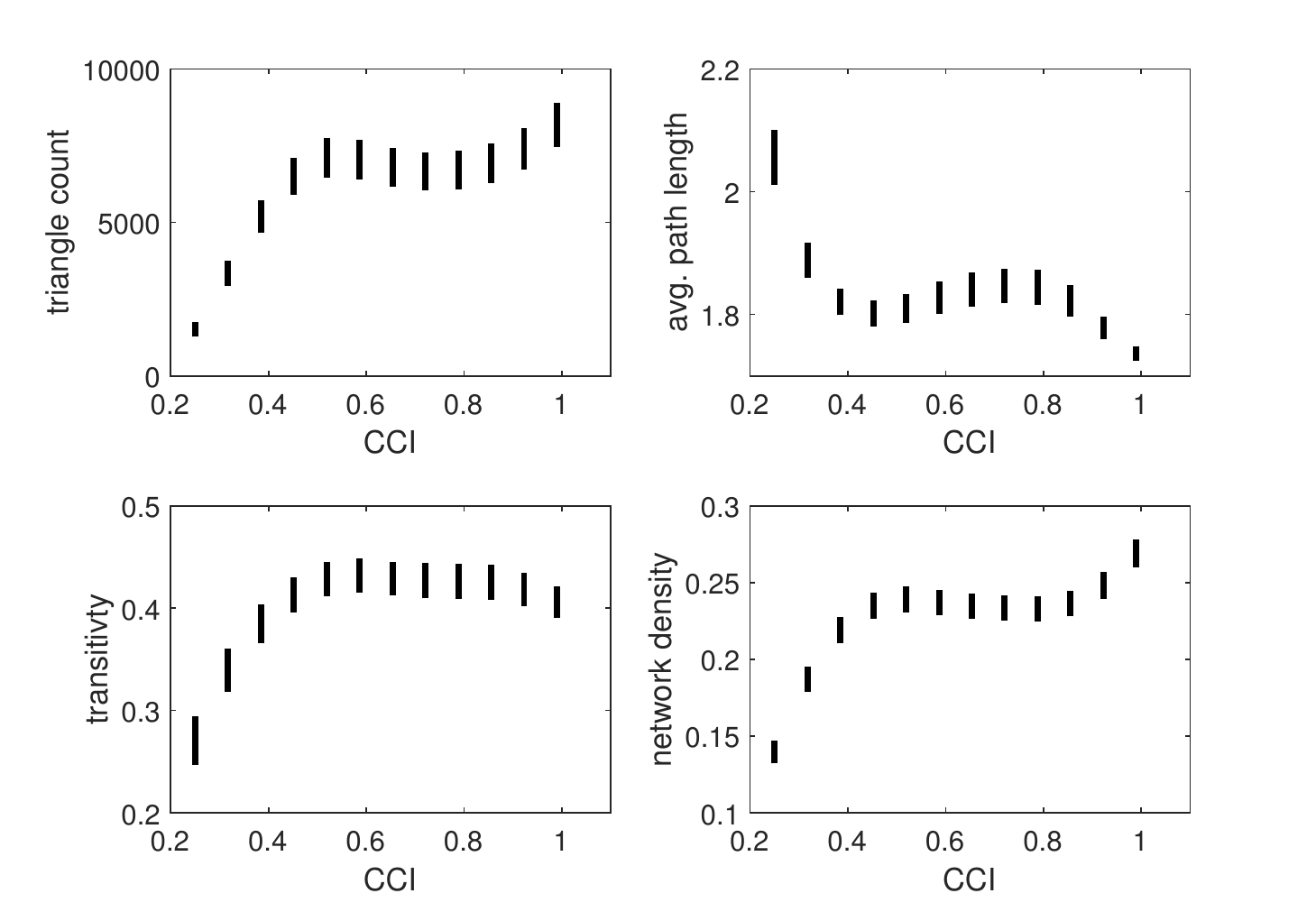}
\caption{$95\%$ bootstrap confidence intervals for different network summary statistics (y-axis) as a function of CCI scores. These were obtained via networks resampled using our estimated multi-graphon with $B=10000$ bootstrap replications for each CCI score.}
\label{fig:smallworld_CCI}
\end{figure}

\section{Conclusion}
By establishing regimes under which ordinal embedding allows consistent estimation of latent nodal positions in the purified graphon space, we have shown how standard smoothing techniques (kernel methods, regression splines and others) can be employed for estimation of the network generating process. We achieved this for a collection of networks on the same set of nodes, which are commonly observed in many applications. 
With these results, estimation of the multi-graphon model from a set of networks observed over time simply reduced to nonparametric regression with estimated nodal positions and equi-spaced time points. For cross-sectional networks, the same was achieved using network-level covariates as measurements for unobserved network-positions. 
In applications where repeated measurements on each of the $m$ networks are available, one may follow the approach outlined in this paper to likewise define pairwise distance between networks to allow estimation of latent network-positions. 

Further, our approach may be used as a building block to study richer models describing network effects through the multi-graphon function. For example, with the multi-graphon function modeled as the sum of a standard two-dimensional graphon function and with either $p$ scalar functions of $p$ covariates as in an additive model~\cite{Hastie} or with a simple linear combination of $p$ covariates implying a partially linear model~\cite{Carroll}. These models shall allow one to integrate more than a single network-level covariate to explain variability across networks without having to deal with the curse of dimensionality via the multi-graphon function.
Modeling and estimation techniques developed in this paper may be extended to longitudinal networks to simultaneously estimate structural variability across both the subject and time axes, as we intend to do in future work.

\section{Acknowledgements}
The authors thank Dr. Joshua T. Vogelstein and Eric Bridgeford at John Hopkins University for sharing the human connectome data. We are also grateful to Professor Carey Priebe for helpful discussions. 

\begin{appendix}
\section{Proofs}\label{app:Proofs}
To prove the main results in~\cref{latent-est-rate,main-clt} we first consider the following result on consistency of pairwise distance estimates. We show that under the null of~\cref{model-def} the estimator produced by~\cref{alg:dist} is a consistent estimator of ${\mathrm{dist}}(\bar{f})$ given by~\cref{def:dist}.

\begin{proof}[Proof of~\cref{prop:consistency}]
Set $U\sim\mathrm{U}[0,1]$. The proof proceeds by computing the variances.
Note here that while we could have proceed like~\cite[Theorem 1.]{Airoldi13} (i.e., via Bernstein's inequality) we found that inefficient when aiming to account for sparsity and varied speed for the growth of $m$ relative to $n$.
To do so, we first consider the $\hat s_{ik} = \sum_{l\in S} A_{ikl}/|S|$, for fixed $i, k$. There, we see that conditionally on $x_i, x_k$, $(A_{ikl})_l$ is i.i.d. $\mathrm{Bernoulli}\big(\rho_n f(x_i,x_k;U)\big)$, so that $\hat s_{ik} \sim  \mathrm{Binomial}(|S|, \rho_n\E_U f(x_i,x_k;U))/|S|$ with $U$ the uniform distribution on $[0,1]$. Therefore, we have that conditionally on $x_i, x_{k'}, x_k$
\begin{align*}
\E\hat s_{ik}
  & = \rho_n\E_U f(x_i,x_k;U) = \rho_n\bar f(x_i, x_k)\\
\var\hat s_{ik}
  & = {\rho_n\bar f(x_i, x_k)(1-\rho_n\bar f(x_i, x_k))}/{|S|} = \Theta\left(\frac{\rho_n}{m}\right)\\
\cov(\hat s_{ik}, \hat s_{ik'})
  & = \E(\hat s_{ik}\hat s_{ik'}) - \E\hat s_{ik} \E\hat s_{ik'}\\
  & = \frac{1}{|S|^2}\sum_{l,l'\in S} \E A_{ikl}A_{ik'l'} - {\rho_n^2\bar f(x_i, x_k)\bar f(x_i, x_k')}\\
  & = \frac{1}{|S|^2}\left(\sum_{l\neq l'\in S} \E A_{ikl}\E A_{ik'l'} + \sum_{l\in S}\E A_{ikl}A_{ik'l}\right) - {\rho_n^2\bar f(x_i, x_k)\bar f(x_i, x_k')}\\
   & = \frac{|S|(|S|-1)\rho_n^2\bar f(x_i, x_k)\bar f(x_i, x_k') + |S|\rho_n^2\E_U\! f(x_i,x_k;U) f(x_i,x_{k'};U)}{|S|^2} - {\rho_n^2\bar f(x_i, x_k)\bar f(x_i, x_k')}\\
  & = \Theta\left(\frac{\rho_n^2}{m}\right)  
\end{align*}
Then, as $\hat r_{ij} \sim (\sum_{k\in[n]\setminus\{i,j\}} \hat s_{ik}\hat s_{jk}')/(n-2)$, with $s_{jk}'$ an independent copy of $s_{jk}$, for any $k\in[n]\setminus\{i, j\}$ and conditionally on $x_i, x_j$, using the law of total variance:
\begin{align*}
\E\hat r_{ij}
  & = \rho_n^2\int_{[0,1]}\bar f(x_i, t)\bar f(x_j, t)dt,\\
\var\hat r_{ij}
  & = \big(\var(\hat s_{ik}\hat s_{jk}') + (n-2)\cov(\hat s_{ik}\hat s_{jk}', \hat s_{ik'}\hat s_{jk'}')\big)/(n-2)\\
  & = \big(\E\var(\hat s_{ik}\hat s_{jk}'\,\vert\,x_k)+\var(\rho_n^2\bar f(x_i, x_k)\bar f(x_j, x_k))\big)/(n-2) + O(\rho_n^4/m^2)\\
  & = \big(\E[\var(\hat s_{ik}\,\vert\,x_k)\var(\hat s_{jk}'\,\vert\,x_k)]+\rho_n^4\var(\bar f(x_i, x_k)\bar f(x_j, x_k))\big)/(n-2)+ O(\rho_n^4/m^2)\\
  & = \big(O(\rho_n^2/m^2)+O(\rho_n^4)\big)/(n-2)+ O(\rho_n^4/m^2)
    = O\big(\rho_n^4(n^{-1}+m^{-2}+(\rho_n^2 m^2n)^{-1})\big).
\end{align*}
Similar computation lead to $\hat\rho_n/\rho_n = 1+ O_p\big((nm)^{-1/2}\big)$. Then, as all variables are positive, we may call upon Markov's inequality, to obtain,
\[
{\hat r_{ij}}/{\hat\rho_n^2} = \int_{[0,1]}\bar f(x_i, t)\bar f(x_j, t)dt + \iota_{ij}',
\]
where $\E\iota_{ij}'=0$ amd $\iota_{ij}' = O_p(\sqrt\epsilon)$. Therefore,
\begin{align*}
(\hat r_{ii}+\hat r_{jj}-\hat r_{ij}-\hat r_{ji})/\hat\rho_n^2
& =
\int_{[0,1]}\bar f(x_i, t)^2dt+\int_{[0,1]}\bar f(x_j, t)^2dt\\
&\qquad -2\int_{[0,1]}\bar f(x_i, t)\bar f(x_j, t)dt
  +(\iota_{ii}'+\iota_{jj}'-\iota_{ij}'-\iota_{ji}')\\
& = \int_{[0,1]}\big(\bar f(x_i, t)-\bar f(x_j, t)\big)^2dt
  + \iota_{ij},
\end{align*}
where $\E\iota_{ij}=0$ and $\iota_{ij} = O_p(\sqrt\epsilon)$, which is the sought after result.
\end{proof}

\begin{proof}[Proof of~\cref{consistent-latent-ordering}]
First we note that 
\[
\frac{\widehat{\mathrm{dist}_{ij}}(A)-\widehat{\mathrm{dist}_{pq}}(A)}
  {\E\left[\widehat{\mathrm{dist}_{ij}}(A)-\widehat{\mathrm{dist}_{pq}}(A)\right]}
  = 1+\frac{\iota_{ij}-\iota_{pq}}
  {\E\left[\widehat{\mathrm{dist}_{ij}}(A)-\widehat{\mathrm{dist}_{pq}}(A)\right]}.
\]
Thus, upper bounding $\P\big[-(\iota_{ij}-\iota_{pq})>\E[\widehat{\mathrm{dist}_{ij}}(A)-\widehat{\mathrm{dist}_{pq}}(A)]\big]$ will yield the result. To produce this upper bound we will use Bernstein's equality. First, recalling the notation of the proof of~\cref{prop:consistency}, we have that
\[
\iota_{ij}-\iota_{pq} = \frac{1+O(1/n)}{n-2}\sum_{k\in[n]\setminus\{i,j,p,q\}} \left(\hat s_{ik} \hat s_{jk}'-\hat s_{pk} \hat s_{qk}'-\E[\hat s_{ik} \hat s_{jk}'-\hat s_{pk} \hat s_{qk}']\right)/\hat\rho_n^2,
\]
where the $O(1/n)$ contains the terms in $\iota_{ij}$ that implicate $p$ and $q$, and conversely, the terms $\iota_{pq}$ that implicate $i$ and $j$. Then, as the $(\hat s_{ik} \hat s_{jk}'+\hat s_{pk} \hat s_{qk}')_k$ are i.i.d. and upper bounded by $2$, and since from the proof of~\cref{prop:consistency}, $(n-2)^{-1}\sum_k\var\big(\hat s_{ik} \hat s_{jk}'/\hat\rho_n^2\big)=O(\epsilon)$, we can directly call upon Bernstein's equality to obtain for any $\nu>0$
\[
\P\big[-(\iota_{ij}+\iota_{pq})>\nu\big]\leq \exp\bigg({-\frac{\left(1+O(1/n)\right)\nu^2/2}{\frac{4}{3(n-2)}\nu\left(1+O(1/n)\right)+O(\epsilon)}}\bigg)
\]
and therefore the result.
\end{proof}

\begin{proof}[Proof of~\cref{latent-est-rate}]
Call $\bar f^\ast$ the purified graphon~\cite{lovasz12} corresponding to $\bar f$, and $J$ it's support. We assume that $J\subset [0,1]$, and write $J$ as the disjoints union of singletons and intervals in the form$J = \cup_s J_s$. Then, since by~\cite[Theorem 13.27]{lovasz12}, $\mathrm{dist}\big((\bar f^\ast,\psi(U)); \cdot, \cdot\big)$ is the metric induced by $\bar f^\ast$, embedding through $\widehat{\mathrm{dist}}(A)$ will lead to an embedding in $J$.

Conditionally on all the estimated distances being properly ordered, which will happen eventually in $n$ by~\cref{consistent-latent-ordering}, we observe that $\mathrm{dist}\big((\bar f^\ast,\psi(U)); \cdot, \cdot\big)$ first separates vertices in the $J_s$. Call $\mathcal{S}$ the set of vertices selected to be in $J_s$. Then, still within each $J_s$, 
we obtain consistency up to similarity transform and error of order $\eta_n :=\sup_{y\in J_s}\inf_{i\in\mathcal{S}}|y-\phi(x_i)|$ using~\cite[Theorem 3]{Arias2017}. Then, from~\cite{Pyke65}, and since the $x_i$'s are i.i.d. over a bounded set, we have that $\eta_n = O_p(\log n / n)$, which yields the result.
\end{proof}

\begin{proof}[Proof of~\cref{main-clt}]
First, we observe that from~\cref{latent-est-rate} and continuity of $h$, that
\begin{align*}
h(\hat x_i, \hat x_j,&\check z_l, A_{ ijl})\\
&= h(x_i, x_j, \check z_l, A_{ ijl}) + \big(h(\hat x_i, \hat x_j, \check z_l, A_{ ijl})-h( x_i, x
_j, \check z_l, A_{ ijl})\big)\\
&= h( x_i, x_j, \check z_l, A_{ ijl}) + O(\log n/n).
\end{align*}
Since $m=o\big((n/\log n)^2\big)$, we may replace the $\hat x$ by $x$ in~\eqref{main-clt-eq} and ignore the resulting error.

Then, we proceed as as~\cite[Theorem 1]{BickelLevina2012}, and observe that
\begin{align*}
h(x_i, x_j, \check z_l, A_{ ijl})
&= h(x_i, x_j, \check z_l, f(x_i, x_j,z_l))\\
&\quad + \big(h(x_i, x_j, \check z_l, A_{ ijl}) - h(x_i, x_j, \check z_l, f(x_i, x_j, \check z_l)).
\end{align*}
There, the second term is mean $0$, because $h$ is linear in its fourth argument, and by the law of total variance (conditioning by $x$ and $\check z$) and the standard CLT, is of variance $O(1/nm)$, we may ignore the error it induces in~\eqref{main-clt-eq}.

Finally, remains to prove that
\[
\frac{1}{n^2m}\sum_{i,j,l}h(x_i, x_j, \check z_l, f(x_i, x_j, \check z_l))
\]
is asymptotically normal, which is directly obtained via the CLT from two-sample U-statistics~\cite{Grams73}
which applies under our assumptions on $h$.
\end{proof}
We here present and prove a slight modification fo the above to account for replicated networks.

\begin{thm}\label{main-clt-rep}
Fix a smooth graphon function $f$. Set $h:[0,1]^3\to\mathbb{R}$ such that $h$ is symmetric in its first two arguments, linear in the third, and that $h$ and its first derivates are finite almost everywhere. Then, if $\epsilon := n^{-1}+(\rho_n^2 mn)^{-1}=o(1/\log n)$ and $m=o\big((n/\log n)^2\big)$,
\begin{equation}\label{main-clt-rep-eq}
\sqrt{n+m}\Bigg(
  \frac{1}{n^2m}\sum_{i, j}h\big(\hat x_i, \hat x_j, \bar{A}_{ij}\big)-
  \E_{u, v \sim \mathrm{U}[0,1]} h\big(u, v, \bar{f}(u, v)\big)
\Bigg)\to\mathrm{Normal}\big(0,\bar{\Sigma}\big).
\end{equation}
\end{thm}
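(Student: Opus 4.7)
The plan is to mirror the three-step decomposition used in the proof of Theorem \ref{main-clt}, the only change being that the average over the network index $l$ has already been absorbed into the aggregated adjacency $\bar A_{ij}$. First I would substitute $\hat x_i$ by $x_i$ via Theorem \ref{latent-est-rate}; next I would replace $\bar A_{ij}$ by its conditional mean $\rho_n \bar f(x_i, x_j)$ using linearity of $h$ in its third argument; and finally I would apply the classical two-sample U-statistic CLT of \cite{Grams73} to what remains, which is a smooth, symmetric function of the i.i.d.\ sample $x_1,\dots,x_n$.

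For the first step, Theorem \ref{latent-est-rate} delivers $\max_i |\hat x_i - x_i| = O_p(\log n / n)$ up to a similarity transform, so by continuity of $h$ each term changes by $O(\log n/n)$ and the combined effect, multiplied by $\sqrt{n+m}$, is $o(1)$ under $m = o\big((n/\log n)^2\big)$. For the second step, after integrating out the i.i.d.\ $z_l$'s, the entries $A_{ij1},\dots,A_{ijm}$ are conditionally i.i.d.\ $\mathrm{Bernoulli}\big(\rho_n \bar f(x_i,x_j)\big)$ given $(x_i,x_j)$, so $\bar A_{ij} - \rho_n \bar f(x_i,x_j)$ is conditionally mean-zero with variance of order $\rho_n/m$; linearity of $h$ in its third argument propagates this bound to the substitution error, and averaging over the $n^2$ pairs (which are pairwise uncorrelated given the $x_i$'s, since distinct node pairs give independent Bernoulli trials) produces a remainder whose variance is $O(\rho_n/(n^2 m))$. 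After multiplication by $\sqrt{n+m}$ this is $o(1)$ precisely under the sparsity assumption $\epsilon = (\rho_n^2 m n)^{-1} = o(1/\log n)$.

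What then remains is asymptotic normality of the V-statistic $n^{-2}\sum_{i,j} h\big(x_i, x_j, \rho_n \bar f(x_i, x_j)\big)$: its symmetric kernel is bounded with finite first derivative, the diagonal $i=j$ contribution is negligible in the limit, and~\cite{Grams73} directly delivers a CLT at rate $\sqrt n$; combined with the $O(1/\sqrt m)$-order contribution retained from step two this produces the advertised $\sqrt{n+m}$ scaling and limiting covariance $\bar\Sigma$ obtained from the Hoeffding projection of $h(\cdot, \cdot, \rho_n \bar f(\cdot, \cdot))$. The main obstacle I anticipate is the second-step bookkeeping: one has to verify that the plug-in error in $\bar A_{ij}$ really contributes no more than $\rho_n/m$ per pair after accounting for the extra variance injected by the random $z_l$'s, and that cross-covariances between $\bar A_{ij}$ and $\bar A_{i'j'}$ sharing an index (which enter through the shared $x_i$, via the $z_l$'s) do not inflate the averaged variance beyond this order. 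This is what pins down the dual $(n, m, \rho_n)$ condition and is the only place where the argument differs substantively from a naive plug-in calculation.
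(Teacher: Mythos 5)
Your proposal follows essentially the same route as the paper's own proof: replace $\hat x_i$ by $x_i$ via Theorem~\ref{latent-est-rate} and the condition $m=o\big((n/\log n)^2\big)$, swap $\bar A_{ij}$ for its conditional mean using linearity of $h$ in the third argument together with a variance bound, and conclude with the two-sample $U$-statistic CLT of~\cite{Grams73}. Your step-two bookkeeping (the per-pair variance of order $\rho_n/m$ and the cross-covariances induced by the shared $z_l$'s) is in fact more explicit than the paper's one-line appeal to the law of total variance, and both residual terms do vanish after the $\sqrt{n+m}$ scaling; the only small quibble is that the condition $\epsilon=o(1/\log n)$ is really needed for the distance/embedding step feeding Theorem~\ref{latent-est-rate}, not for controlling the plug-in error in $\bar A_{ij}$, which is negligible whenever $m\ll n^2$.
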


\begin{proof}[Proof of~\cref{main-clt-rep}]
Proceeding exactly as in the proof of~\cref{main-clt}, from~\cref{latent-est-rate} and continuity of $h$, it follows that
\begin{align*}
h(\hat x_i, \hat x_j, \bar{A}_{ij})
&= h(x_i, x_j, \bar{A}_{ij}) + \big(h(\hat x_i, \hat x_j, \bar{A}_{ij})-h( x_i, x
_j, \bar{A}_{ij})\big)\\
&= h( x_i, x_j, \bar{A}_{ij}) + O(\log n/n).
\end{align*}
Since $m=o\big((n/\log n)^2\big)$, we may replace the $\hat x$ by $x$ in~\eqref{main-clt-eq} and ignore the resulting error.

Then, we proceed as as~\cite[Theorem 1]{BickelLevina2012}, and observe that 
\begin{equation*}
h(x_i, x_j, \bar{A}_{ij})
= h(x_i, x_j, \bar{f}(u,v))
 + \big(h(x_i, x_j, \bar{A}_{ij}) - h(x_i, x_j, \bar{f}(u,v)) \big).
\end{equation*}
There, the second term is mean $0$, as $h$ is linear in its third argument, and by the law of total variance (conditioning on $x$) and the standard CLT, is of variance $O(1/n)$, we may ignore the error it induces in~\eqref{main-clt-eq}.

Finally, remains to prove that: 
\[
\frac{1}{n^2m}\sum_{i,j}h(x_i, x_j, \bar{f}(u,v))
\]
is asymptotically normal, which is directly obtained via the CLT from two-sample U-statistics~\cite{Grams73}
which applies under our assumptions on $h$.
\end{proof}
\end{appendix}

\bibliographystyle{apalike}
\bibliography{NetworkSample_graphon_merged}
\end{document}